\definecolor{Aqua}{rgb}{0,1,1}
\definecolor{Black}{rgb}{0,0,0}
\definecolor{Blue}{rgb}{.0,.0,1}
\definecolor{Brown}{rgb}{.5,0,0}
\definecolor{DarkBlue}{rgb}{0,0.08,0.45} 
\definecolor{DarkGray}{gray}{.25}
\definecolor{DarkGreen}{rgb}{.17,.5,.05}
\definecolor{Fuchsia}{rgb}{1,0,1}
\definecolor{Gray}{gray}{.5}
\definecolor{Green}{rgb}{0,.5,0}
\definecolor{LightCyan}   {rgb}{0.88,1.,1.}
\definecolor{LightGray}{gray}{.75}
\definecolor{Lime}{rgb}{0,.5,.5}
\definecolor{Magenta}{rgb}{1,0,1}
\definecolor{MyDarkGreen}{rgb}{0,0.4,0}
\definecolor{Navy}{rgb}{0,0,.5}
\definecolor{Olive}{rgb}{.5,.5,0}
\definecolor{Orange}      {rgb}{1.,0.65,0.}
\definecolor{PaleGreen}   {rgb}{0.6,0.98,0.6}
\definecolor{Pink}        {rgb}{1.,0.75,0.8}
\definecolor{Purple}{rgb}{.5,0,.5}
\definecolor{Red}{rgb}{1,0,0}
\definecolor{Silver}{gray}{.75}
\definecolor{Teal}{rgb}{0,.5,.5}
\definecolor{VioletRed}{rgb}{.208,0.125,.144}
\definecolor{YellowOrange}{rgb}{.82,.5,.11}
\newcommand{\poa}{{\rm PoA}}
\newcommand{\spoa}{{\rm SPoA}}
\newcommand{\opt}{{\rm OPT}}
\newcommand{\bea}{\begin{eqnarray*}}
\newcommand{\eea}{\end{eqnarray*}}
\newcommand{\beq}{\begin{equation}}
\newcommand{\eeq}{\end{equation}}
\newcommand{\cala}{{{{\cal A}}}}
\newcommand{\calb}{{{{\cal B}}}}
\newcommand{\calc}{{{{\cal C}}}}
\newcommand{\calg}{{{{\cal G}}}}
\newcommand{\calm}{{{{\cal M}}}}
\newcommand{\caln}{{{{\cal N}}}}
\newcommand{\calsn}{{\cal{SN}}}
\newcommand{\calo}{{{{\cal O}}}}
\newcommand{\calp}{{{{\cal P}}}}
\newcommand{\calr}{{{{\cal R}}}}
\newcommand{\cals}{{{{\cal S}}}}
\newcommand{\area}{{\it a}}
\newcommand{\nfdh}{\mbox{{\rm NFDH}}}
\newcommand{\eps}{\varepsilon}
\newcommand{\soc}{{\rm sc}}
\newtheorem{theorem}{Theorem}[section]
\newtheorem{lemma}[theorem]{Lemma}
\newtheorem{corollary}[theorem]{Corollary}
\newcommand{\valor}{x}
\date{}
\begin{document}
 
\title{\sc Prices of anarchy of \\ selfish 2D bin packing
    games\protect\footnote{\ This research was partially supported by
      CNPq (Proc.~456792/2014-7, 311499/2014-7, 305705/2014-8, 308116/2016-0, 306464/2016-0, 425340/2016-3),
FAPESP (Proc.~2013/03447-6, 2015/11937-9, 2016/01860-1, 2016/23552-7) and Project MaCLinC of NUMEC/USP.}}
  \author{
Cristina G. Fernandes\thanks{\ Institute of Mathematics and Statistics, University of São Paulo, Brazil, \texttt{\{cris,cef,yw\}@ime.usp.br}.}  \  \quad
Carlos E. Ferreira$^{\dag}$  \\
Flávio K. Miyazawa\thanks{\ Institute of Computing, University of Campinas, Brazil, \texttt{fkm@ic.unicamp.br}.} \ \quad
Yoshiko Wakabayashi$^{\dag}$}
 
\maketitle

\begin{abstract}
  We consider a game-theoretical problem called selfish 2-dimensional
  bin packing game, a generalization of the 1-dimensional case
  already treated in the literature. In this game, the items to be
  packed are rectangles, and the bins are unit squares. The game
  starts with a set of items arbitrarily packed in bins. The cost of
  an item is defined as the ratio between its area and the total
  occupied area of the respective bin. Each item is a selfish player
  that wants to minimize its cost. A migration of an item to another
  bin is allowed only when its cost is decreased.
  We show that this game always converges to a Nash equilibrium (a
  stable packing where no single item can decrease its cost by
  migrating to another bin).  We show that the pure price of anarchy
  of this game is unbounded, so we address the particular case where
  all items are squares.  We show that the pure price of anarchy of
  the selfish square packing game is at least~$2.3634$ and at
  most~$2.6875$. We also present analogous results for the strong Nash
  equilibrium (a stable packing where no nonempty set of items can
  simultaneously migrate to another common bin and decrease the cost
  of each item in the set).  We show that the strong price of anarchy
  when all items are squares is at least~$2.0747$ and at
  most~$2.3605$.

\smallskip

  {\it Keywords:} Selfish bin packing; square packing; rectangle packing; 
                  Nash equilibrium; strong Nash equilibrium; price of anarchy.  
\end{abstract}


\section{Introduction}

The advent of the Internet and its increasing use have brought new
computational tasks and different ways of performing various activities.  
In such a decentralized environment, many activities involve competition and
collaboration over the resources, so users may act selfishly to maximize their
benefits.  This behavior suggests game-theoretical models for a number of applications.

Many new ideas, approaches, and models of analysis were first proposed to bin packing problems, 
which motivate the study of game theoretical versions of these problems.

We investigate here a class of packing games which we call~\emph{selfish 2D bin packing games}, 
in the special case where the bins (or recipients) are unit squares and 
the items to be packed are rectangles, or more particularly, squares.  
We call the corresponding games~\emph{selfish rectangle packing} or \emph{selfish square packing}. 
(The more general 2D class may include cases in which the items are $2$-dimensional objects 
of other specific forms, such as disks or triangles.)  
All games start with a set of items packed in bins.  
In our model, the cost of an item is defined as the ratio between its area
and the total occupied area of the respective bin.  
Each item is a selfish player: it wants to minimize its cost and, 
for that, it may selfishly migrate to a bin with a better occupied area. 
A stable packing in which no item can decrease its cost by migrating
to another bin is called a \emph{Nash equilibrium}.

Koutsoupias and Papadimitriou~\cite{KoutsoupiasP99,KoutsoupiasP09}
were the first to study a measure in a game-theoretic framework that
nowadays is known as the \textit{price of anarchy}\footnote{\
  In~\cite{KoutsoupiasP99}, the authors used the term
  \emph{coordination ratio} for the concept, which was later called
  \emph{price of anarchy} by Papadimitriou~\cite{Papadimitriou01}.},
which is the ratio between the worst social cost (number of used bins)
of a Nash equilibrium and the optimal social cost (minimum number of
bins needed to pack all items).  When the game admits coalitions,
these concepts are known as the \textit{strong Nash equilibrium}
and \textit{strong price of anarchy}.
In non-cooperative games, the price
of anarchy measures the loss of the overall performance due to the
decentralized environment and the selfish behavior of the players.

The 1-dimensional (1D) version of the game described above is known as
the \emph{selfish (1D) bin packing game}. It was first investigated by
Bil\`o~\cite{Bilo06}, who proved that this version of the game admits
a pure Nash equilibrium. He showed upper bounds on the number of steps
to reach a Nash equilibrium from an arbitrary initial configuration
and also showed that the price of anarchy is at least $1.6$ and at
most $1.666$. The lower bound and the upper bound have been improved
by Yu and Zhang~\cite{YuZ08} and by Epstein and
Kleiman~\cite{EpsteinK11} to $1.6416$ and $1.6428$, respectively.
They also showed that the strong price of anarchy is at least $1.6067$
and at most $1.6210$.
There are also results in the literature, with different cost functions.
Ma et al.~\cite{MaDHTYZ13} 
studied the model in which all items in the same bin share the cost equally, 
that is, if a bin contains $k$ items, then the cost for each item in this bin is $1/k$.  
They showed that the price of anarchy of any Nash equilibrium under this cost function has an upper bound of $1.7$ and also that it is possible to obtain a Nash equilibrium from a feasible packing in $O(n^2)$ steps without increasing its social cost. This result leads to an algorithm that obtain a Nash equilibrium in time $O(n^2)$ with price of anarchy that is at most  $1+\epsilon$, for any given $\epsilon>0$.
For a survey on selfish packing games, we refer the reader 
to Epstein~\cite{Epstein13}.



          
In 1D packing, it is easy to decide if an item fits in a unit bin with
other items, but this is not the case for higher dimensional packing.
So we also consider a parameterized version of the original game in which 
the players use a specific (polynomial-time) packing algorithm to decide on moving an item. 
One can think of the original game as one parameterized by an exact packing algorithm. 
As we will see, some results obtained with the use of specific 
packing algorithms yield results for the original generic game.

For the rectangle packing game, we consider oriented packing (that is,
when rotations are not allowed).  We show that, in this case, the game
converges to a Nash equilibrium. Then, we prove that its price of
anarchy is unbounded. In view of this, we consider two particular
cases: (a) the parametric version in which the dimensions of the
rectangles are bounded by~$1/m$, for an integer $m\geq 2$; and (b) the
case in which all items are squares.  For the first case, we
show that the game parameterized by the well-known \nfdh\ (Next Fit
Decreasing Height) algorithm~\cite{CoffmanGJT80} (and the original
game as well) has price of anarchy at most $(m/(m-1))^2$.  For the
latter case, the selfish square packing game, we prove that the pure
price of anarchy is at least $2.3634$ and at most~$2.6875$.

We also present results on the strong Nash equilibrium.  We prove that
the strong price of anarchy of the selfish square packing game is at
least~$2.0747$ and at most~$2.3605$.

We have presented some preliminary results on the selfish 2D packing
game in an extended abstract for a conference~\cite{FFMW11}; this
paper contains improved results and new ones.

\section{Problem definition and preliminaries}\label{description}

The games considered here are on packing of rectangles (or squares)
into unit squares.  They are natural generalizations of the selfish 1D
packing game, that was introduced by Bil\`o~\cite{Bilo06}.
We observe that, as we consider oriented packing, there is no loss of
generality in the assumption that the bins are unit squares (if not, a
scaling of the items and bins will reduce the instance to this case).

We denote by $\{1,2,\ldots,n\}$ the set of items (rectangles) in these
games, and assume that each item~$i$ has area $a_i>0$ and fits in a
bin. Each item is a selfish player that wants to minimize its cost
(defined in what follows).

A \emph{configuration} of a game in a certain moment is a sequence
$p=(p_1,\ldots,p_n)$, where $p_i$ indicates the bin selected by~$i$. For a bin
$k$, we denote by~$R(k)$ the set of rectangles~$i$ for which $p_i=k$. If the
rectangles in $R(k)$ fit all together in a bin, then $\area(R(k))$ denotes
the sum of the areas of the rectangles in~$R(k)$, otherwise, we define that
$\area(R(k))=0$, and refer to $k$ as an \emph{infeasible bin}.
We say that $p$ \emph{uses} a bin~$k$ if $\area(R(k))\neq 0$,
and we say that $p$ is a \emph{feasible configuration} if $\area(R(p_i))\neq
0$ for $i=1,\ldots,n$.  The \emph{cost} of an item $i$ in a configuration
$p$ is given by $c_i = a_i/\area(R(p_i))$ if $\area(R(p_i)) \neq 0$, and is
infinite otherwise.

As in the Elementary Stepwise System dynamics~\cite{Even-darKM07,OrdaRS93}, 
one at a time players selfishly change their choice of bin, in order to 
minimize their cost. That is, a player $i$ in a bin~$k$ has incentive to move 
to a bin~$\ell$ if $i$ fits into bin $\ell$ and $a(R(\ell) \cup \{i\}) > a(R(k))$, 
that is, after the move, its cost decreases.  We note that, if a
configuration is feasible, then an improving step only happens if the
resulting configuration is also feasible.  If a configuration is not
feasible (because $a(R(k))=0$ for some bin~$k$), then an item in an
infeasible bin may migrate to another bin (possibly an empty one) and
decrease its cost. 

The \emph{selfish rectangle packing game} (SRPG) is defined by a set
$\{1,2,\ldots,n\}$ of items (rectangles or players), and a set of unit
bins into which the items are to be packed, and a cost function that
assigns to each item $i$ a cost $c_i$, as defined above. The game
starts with an arbitrary configuration. Each item wants to minimize
its cost, so it can selfishly migrate to decrease its cost. 
The \emph{social cost} is the number of used bins (which is precisely
$c_1+c_2+\cdots+c_n$ if the configuration is feasible).  
When all items to be packed are squares, we refer to the 
\emph{selfish square packing game} (SSPG).

We say that a configuration of a game is a \emph{Nash equilibrium} 
if no player can decrease its cost by moving to (that is, selecting) another bin. 
Since a game starts at an arbitrary configuration, it is reasonable to ask whether 
it will always reach a Nash equilibrium. We say that a game \emph{converges} to 
a Nash equilibrium if the answer to this question is yes. 
The next result also holds for the (more general) selfish 2D bin packing game.

\begin{lemma}\label{lema:convergencia}
  The selfish rectangle packing game converges to a Nash equilibrium.
\end{lemma}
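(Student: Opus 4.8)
The plan is to exhibit a \emph{monovariant} (potential) that strictly changes along every improving move and that takes only finitely many values, so that no infinite sequence of improving moves can exist; since a configuration from which no improving move is available is by definition a Nash equilibrium, this yields convergence. Concretely, I would associate with each configuration $p$ the vector of bin loads $\area(R(k))$ listed in non-increasing order (padded with zeros to a fixed length $n$, since at most $n$ bins are ever occupied), and compare two such vectors lexicographically. Because each load is the sum of the areas of the items sharing a bin, the possible loads are subset-sums of $\{a_1,\dots,a_n\}$ and the underlying partition of the items into bins ranges over a finite set; hence this sorted-load vector takes only finitely many values, and lexicographic order is a total order on them.

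The core step is to show that every improving move strictly increases this vector. Suppose item $i$ migrates from bin $k$ to bin $\ell$. An improving move requires that $i$ fit in $\ell$ and that $\area(R(\ell)\cup\{i\})>\area(R(k))$; writing $x=\area(R(\ell))$ for the old load of the destination, its new load is $x+a_i$ and the improving condition reads $x+a_i>\area(R(k))$. I would distinguish the two types of move identified in the text. If the source bin $k$ is feasible, it remains feasible after $i$ leaves, so the move replaces the two loads $\{x,\area(R(k))\}$ by $\{x+a_i,\ \area(R(k))-a_i\}$ with the sum preserved and with $x+a_i$ strictly larger than both removed loads. If instead $k$ is infeasible, its load is $0$ and the move replaces $\{x,0\}$ by $\{x+a_i,z\}$, where $z\ge 0$ is the new (possibly still zero) load of $k$, again with the new entry $x+a_i>\max\{x,0\}$. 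In both cases I would invoke the elementary fact that replacing two entries of a multiset by two new entries whose maximum strictly exceeds the maximum of the removed pair (all other entries unchanged) strictly increases the non-increasing sorted vector in lexicographic order, proved by a counting argument on the number of entries lying above each threshold. This delivers the desired strict increase of the potential.

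With both ingredients in hand the conclusion is immediate: along any sequence of improving moves the potential strictly increases inside a finite totally ordered set, so the sequence is finite, and its last configuration admits no improving move, i.e.\ is a Nash equilibrium. I expect the main obstacle to be the bookkeeping around infeasible bins rather than any deep idea: one must check that an improving step out of an infeasible bin (where the mover has infinite cost) indeed lands in a feasible bin and still increases the sorted-load vector, and that no improving step ever turns a feasible configuration into an infeasible one (as already noted in the text), so that the single monovariant governs both the initial ``repair'' phase and the subsequent purely feasible dynamics. Verifying the threshold-counting lemma and confirming that it applies uniformly in the feasible and infeasible cases is the one place where care is genuinely needed.
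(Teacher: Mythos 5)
Your proposal is correct and follows essentially the same route as the paper: both use the non-increasingly sorted vector of bin loads $\area(R(k))$ as a lexicographic potential that takes finitely many values and strictly increases with every improving migration (since the destination bin's new load exceeds both the old source and old destination loads). Your extra bookkeeping on infeasible bins and the threshold-counting justification of the lexicographic increase only makes explicit what the paper leaves implicit.
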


\begin{proof}
  We follow the proof presented by Bil\`o~\cite{Bilo06}, using area instead of
  length. At each moment of a game, consider the list whose elements are the
  numbers $\area(R(k))$ for each used bin $k$, sorted in non-increasing
  order. Note that there is a finite number of different such lists (even
  considering that some bins may be infeasible).  After each migration, the
  new list is always lexicographically greater than the previous. Indeed, when
  a rectangle migrates from a bin $k$ to a bin $\ell$, it is clear that
  $\area(R(\ell))$ after the migration is greater than both $\area(R(k))$ and
  $\area(R(\ell))$ before the migration. So, after a finite number of steps,
  no item can migrate.
\end{proof}

The minimum number of bins needed to pack all the rectangles in a game $G$ is
the \emph{optimal social cost}, denoted simply by $\opt(G)$, or $\opt$ when
$G$ is clear from the context.   Let~${\soc}(p)$ be the number of bins used in a
feasible configuration~$p$. (The acronym sc stands for \emph{social cost}.) 
Let $\caln(G)$ be the set of all Nash equilibria of~$G$. 



The \emph{(asymptotic) price of anarchy} of a class $\calg$ of games is
defined as
\begin{equation}
   \poa := \limsup_{m\rightarrow\infty\;\;} \sup_{G\in\calg,\; \opt(G)=m\; } \max_{p \in \caln(G)} \frac{\soc(p)}{m}.
\end{equation}

The concepts we have defined are also referred to as \emph{pure} price of anarchy and \emph{pure} Nash Equilibrium, but here we omit the term \emph{pure} (that should be understood, if not explicitly stated). 
In this paper, our results on the price of anarchy refer to two classes of games we have already defined: the selfish rectangle packing game (SRPG) and the selfish square packing game (SSPG), both with the cost function we have mentioned (proportional model). We may not write explicitly that the price of anarchy is asymptotic, but we shall always refer to this case.

\medskip

Leung et al.~\cite{LeungTWYC90} showed that the problem of deciding if a set of squares can be packed in a square is NP-complete. So we consider a variant of the game defined above in which a (polynomial-time) packing algorithm $\cala$ is used to decide whether or not a set of items fit together in one bin.  If $\cala$ cannot pack in one bin the rectangles in $R(k)$, then the cost for each~$i$ in $R(k)$ is infinite, otherwise the cost is defined as $a_i/a(R(k))$.  For each packing algorithm $\cala$, we have the corresponding parameterized game, referred to as \emph{selfish rectangle packing game using algorithm~$\cala$}.  In the same way, for these games one can define Nash equilibrium and price of anarchy, denoted by $\poa(\cala)$. Observe that if $\cala$ is an exact packing algorithm, then the corresponding game is precisely SRPG.
If a set of items in a bin $k$ cannot be packed by $\cala$, we say that $k$ is
an {\it infeasible} bin for $\cala$. We make the natural assumption that,
given any item~$i$ and any packing algorithm~$\cala$, the algorithm~$\cala$
succeeds packing $i$ in an empty bin.  A proof analogous to the one presented 
for Lemma~\ref{lema:convergencia} can be used to show that the following holds.

\begin{lemma} 
  For any packing algorithm $\cala$, the selfish rectangle packing game using
  algorithm~$\cala$ converges to a Nash equilibrium.
\end{lemma}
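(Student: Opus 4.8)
The plan is to reproduce, almost verbatim, the lexicographic potential argument used for Lemma~\ref{lema:convergencia}, the only change being that ``feasible'' now means ``packable by~$\cala$'' rather than ``geometrically packable''. To the current configuration~$p$ I would associate the list of the numbers $\area(R(k))$ over all bins~$k$ used by~$p$ (those with $\area(R(k))\neq 0$), sorted in non-increasing order, and show that every improving move makes this list strictly larger in the lexicographic order; finiteness of the range of the potential then yields termination.

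First I would record that the potential takes only finitely many values: by definition $\area(R(k))$ is either~$0$ or the sum $\sum_{j\in R(k)} a_j$ over a subset $R(k)\subseteq\{1,\dots,n\}$ that~$\cala$ packs, and since there are at most $2^n$ subsets there are only finitely many possible sorted lists. This step uses nothing about~$\cala$ beyond the fact that it either packs a given set in one bin or declares it infeasible. Then I would examine a single improving move, in which item~$i$ leaves bin~$k$ for bin~$\ell$: by the definition of an improving move, $\cala$ packs $R(\ell)\cup\{i\}$ and $\area(R(\ell)\cup\{i\})>\area(R(k))$. The only areas that change are those of~$k$ and~$\ell$, and the post-migration area $\area(R(\ell)\cup\{i\})$ strictly exceeds both the pre-migration $\area(R(\ell))$ (since $a_i>0$) and the pre-migration $\area(R(k))$ (by the improving condition), while the area of~$k$ can only decrease. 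Exactly as argued for Lemma~\ref{lema:convergencia}, this pushes the sorted list up in the lexicographic order. Since a strictly increasing sequence in a finite totally ordered set cannot repeat a value, the process halts after finitely many steps at a configuration admitting no improving move, that is, a Nash equilibrium.

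The one place calling for a remark --- and the only spot where this differs from Lemma~\ref{lema:convergencia} --- is that~$\cala$ is an arbitrary routine and need not be monotone, so I cannot take for granted that $R(k)\setminus\{i\}$ stays packable once~$i$ departs, nor that the source bin was feasible to begin with. I would point out that this is harmless for the argument: the lexicographic increase only needs the post-migration area of~$k$ to lie strictly below the inserted value $\area(R(\ell)\cup\{i\})$, and this holds whether that area equals $\area(R(k))-a_i$ or collapses to~$0$, and likewise whether~$i$ migrated out of a feasible bin or an infeasible one. (The standing assumption that~$\cala$ always packs a single item in an empty bin is what guarantees an item trapped in an infeasible bin still has an improving move, hence that the terminal equilibrium is feasible; it is not, however, needed for termination itself.) I do not expect any genuine obstacle here: the whole statement is a routine transcription of the earlier proof, and the non-monotonicity of~$\cala$ is the sole subtlety to flag.
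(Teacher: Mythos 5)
Your proposal is correct and is exactly what the paper intends: the paper gives no separate argument for this lemma, stating only that a proof analogous to that of Lemma~\ref{lema:convergencia} applies, and your transcription of the lexicographic potential argument, including the observation that non-monotonicity of~$\cala$ (the source bin's area possibly collapsing to~$0$, or the source bin being infeasible) does not disturb the lexicographic increase, is the intended adaptation. The only nitpick is that when an infeasible source bin becomes feasible after the item leaves, its new area need not lie below the inserted value $\area(R(\ell)\cup\{i\})$ as you assert; but the potential still increases lexicographically, since one is then merely inserting additional positive entries into the sorted list, so the conclusion stands (and the paper's own proof of Lemma~\ref{lema:convergencia} glosses over the same corner case).
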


The analogous concept of \emph{strong price of anarchy} will be
defined in Section~\ref{sec-strong}.  We first present results for the
pure price of anarchy, and in Section~\ref{sec-strong} we present
results for the strong price of anarchy.

\section{Bounds for the pure price of anarchy}

This section contains results on the price of anarchy of the selfish rectangle
(and square) packing game. 


\subsection{Selfish rectangle packing}

We start showing that the price of anarchy of the selfish rectangle packing game  is unbounded.

\begin{theorem}
  The price of anarchy of the selfish rectangle packing game is unbounded.
\end{theorem}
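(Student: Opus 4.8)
The plan is to exhibit, for every target optimal cost $m$, an instance of the selfish rectangle packing game that admits a Nash equilibrium whose social cost grows without bound relative to the optimum. Since the price of anarchy is a supremum over games and Nash equilibria, a single well-chosen family of instances suffices. The key feature I would exploit is that the cost of an item is its area divided by the total occupied area of its bin, so a very \emph{thin} rectangle occupying an almost-empty bin still pays only a tiny absolute cost, and crucially it has little incentive to migrate into a bin already well-filled unless doing so strictly increases the occupied area of its target bin beyond the occupied area it currently sees.

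First I would construct items that pack efficiently together in the optimum but, in the bad equilibrium, are spread out one-per-bin. The natural choice is a collection of long thin rectangles of dimensions roughly $1 \times \eps$ (width $1$, height $\eps$ for a small $\eps>0$): in an optimal packing, $1/\eps$ of them stack perfectly to fill a single unit bin, so a large number $N$ of such items needs only about $N\eps$ bins, giving $\opt \approx N\eps$. I would then argue that the configuration placing each of these $N$ items \emph{alone} in its own bin is a Nash equilibrium: an item of area $\eps$ sitting alone in a bin has cost $a_i/\area(R(p_i)) = \eps/\eps = 1$, and to improve it would need to move to a bin $\ell$ with $\area(R(\ell)\cup\{i\}) > \area(R(k)) = \eps$. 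To block every such move I would engineer each potential target bin so that adding this item is either geometrically infeasible (the item does not fit alongside the occupants) or yields occupied area no larger than $\eps$. The social cost of this spread-out configuration is $N$, so the ratio $\soc(p)/\opt \approx N/(N\eps) = 1/\eps$, which tends to infinity as $\eps \to 0$.

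The main obstacle, and the step requiring the most care, is verifying the Nash condition at the geometric level: I must ensure no thin item can profitably migrate. Placing each $1\times\eps$ item alone does \emph{not} by itself give an equilibrium, because two such items could share a bin and each would then see occupied area $2\eps > \eps$, a strict improvement. The standard fix is to make the widths incompatible so that no two items can coexist in one unit bin — for instance, by giving the items widths slightly larger than $1/2$ (say width $1/2 + \delta$) while keeping their heights such that the area stays small and the optimum stays efficient through a different stacking, or by using a mix of item types so that the only bins an item can legally join are already filled to an area it cannot beat. Concretely I expect the cleanest construction uses rectangles of width exceeding $1/2$: any two have total width exceeding $1$, so they cannot be placed side by side, and since they are full-width-ish they cannot be stacked either once heights are chosen appropriately, forcing one item per bin in \emph{every} feasible configuration while still admitting an efficient optimum via a second coordinated item type that fills the vertical slack.

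Assembling these pieces, I would: (i) fix the item family parameterized by $\eps$ (or $\delta$) and compute $\opt$ explicitly by displaying one good packing; (ii) display the one-item-per-bin configuration and check feasibility; (iii) prove it is a Nash equilibrium by a short case analysis showing every candidate migration fails the strict-area-increase test, invoking that by Lemma~\ref{lema:convergencia} such equilibria exist and are the stable points of the dynamics; and (iv) let $\eps\to 0$ so that $\soc(p)/\opt \to \infty$, and hence by the definition of $\poa$ the price of anarchy is unbounded. The delicate point throughout is choosing dimensions that simultaneously keep $\opt$ small (good packing exists) and make the dispersed packing stable (no profitable move), and I would tune the width-versus-height trade-off precisely to satisfy both constraints at once.
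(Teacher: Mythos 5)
Your high-level strategy (a consolidated optimum versus a dispersed equilibrium, with the Nash condition enforced by blocking every profitable migration) is the right one, and you correctly identify the central obstacle: two thin items alone in separate bins would each gain by merging. But the fix you propose does not work. You suggest making the items pairwise incompatible --- width exceeding $1/2$ so no two sit side by side, and heights ``chosen appropriately'' so no two stack --- thereby ``forcing one item per bin in every feasible configuration.'' That last property is fatal: if no two of your items can coexist in a unit bin in \emph{any} configuration, then the optimal packing also needs one bin per item, so $\soc(p)/\opt$ stays bounded. An unbounded ratio requires items that \emph{can} be consolidated in the optimum yet cannot or will not consolidate at equilibrium, and pairwise geometric incompatibility destroys the first half of that requirement. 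Your fallback phrase (``a mix of item types so that the only bins an item can legally join are already filled to an area it cannot beat'') gestures at the right idea but is not a construction; note also that the incentive side alone cannot save a single-family construction, since a thin item sitting alone has the worst possible occupied area and therefore profits from \emph{every} feasible merge --- it must be blocked geometrically.

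The paper resolves this with two coupled families. Bin $B_i$ ($i=1,\dots,k$) contains a full-width strip of dimensions $(1,1/2^i)$ and a narrow tall rectangle of dimensions $(1/2^i,\,1-1/2^i)$; the only way to pack the pair is to stack them to total height exactly $1$. Because every used bin is capped at height $1$ by such a stack, no full-width strip fits into any other used bin, so those items are frozen geometrically even though their bins have tiny occupied area. The tall narrow item of $B_i$ does fit into $B_j$ for $j>i$, but $B_j$ then has strictly smaller occupied area (by the geometric decay of $1/2^j$), so there is no incentive to move; and it does not fit into $B_j$ for $j<i$. Meanwhile the optimum packs all $k$ strips into one bin (total height $\sum_{i=1}^{k} 1/2^i<1$) and all $k$ tall items into another (total width $\sum_{i=1}^{k} 1/2^i<1$), giving $\opt=2$ and a ratio of $k/2\to\infty$. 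To salvage your write-up, replace the single family of thin items by such a two-family construction; the strictly decreasing heights $1/2^i$ are what simultaneously keep the optimum at two bins and remove any incentive for the tall items to migrate.
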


\begin{proof}
  Let $p$ be a configuration of the selfish rectangle packing game that uses $k$
  bins, say bins $B_1,\ldots,B_k$. Each bin $B_i$ has exactly two rectangles,
  one with dimension $(1,1/2^i)$ and the other with dimension
  $(1/2^i,1-1/2^i)$. Note that, except for symmetry, the only way to pack the
  two rectangles in one bin is placing the second on top of the first one,
  with total height~$1$ (see Figure~\ref{fiunbounded}(a)). Clearly, a rectangle 
  with dimension $(1,1/2^i)$ cannot migrate to another used bin, as there is no 
  room for it. A rectangle with dimension $(1/2^i,1-1/2^i)$ cannot migrate to a 
  bin $B_j$ with $j<i$, as there is no room for it; and, although it fits into 
  a bin $B_j$ with $j>i$, it has no incentive to move to such $B_j$. Thus, configuration~$p$
  is a Nash equilibrium.  An optimal configuration (shown in Figure~\ref{fiunbounded}(b)) 
  has two bins, one with the items of dimensions $(1,1/2^i)$ and the other with the 
  items with dimensions $(1/2^i,1-1/2^i)$.  So, we have $\soc(p)/2\geq k/2$.
  As the number $k$ of bins can be made as large as we wish, this shows that 
  the price of anarchy of SRPG  is unbounded. 
\end{proof}

\begin{figure}[h!]
  \centerline{\input{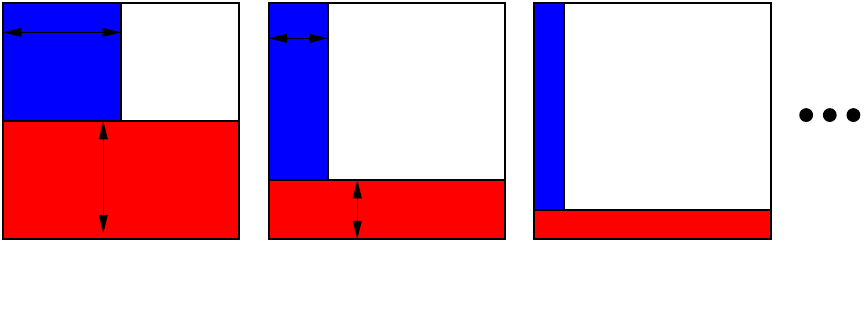_t}\hspace{1.5cm} \input{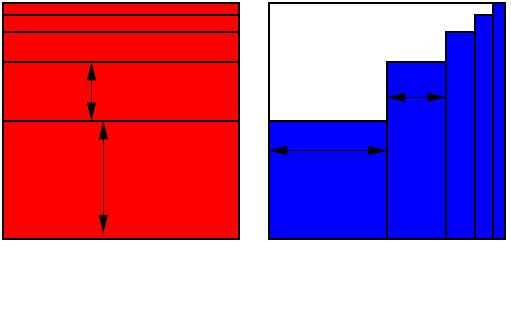_t}~~~~}
  \caption{\label{fiunbounded} (a) A Nash equilibrium using $k$ bins. ~~ (b) Optimal configuration using two bins.}
\end{figure}

Now we show upper bounds for the price of anarchy of the selfish rectangle
packing game using as packing algorithm the well-known \nfdh~\cite{CoffmanGJT80}. 
This algorithm first sorts the set $L$ of rectangles (to be packed) in
decreasing order of height, then packs the rectangles side by side generating
levels. The height of a level is the height of the first rectangle in the
level. When a rectangle cannot be packed in the current level, if it fits in
the same bin on top of the current level, it is packed in a new level above
the previous one. If it does not fit, we say that the \nfdh\ algorithm fails
to pack the rectangles; otherwise, if the whole set $L$ can be packed in this
way in the bin, then we say that \nfdh\ succeeds.

Note that the configuration in Figure~\ref{fiunbounded}(a) is also a Nash
equilibrium of the selfish rectangle packing game using \nfdh.  So $\poa(\nfdh)$ 
is unbounded as well. On the other hand, it is not difficult to show that 
$\poa(\nfdh) \leq 4$ if we allow only rectangles with dimensions at most~$1/2$. 
In fact, in what follows we show an  upper bound for $\poa(\nfdh)$ restricted 
to rectangles with dimensions at most $1/m$, for $m \geq 2$. For that, we use the 
following result, from which Theorem~\ref{thm:bdrect} below can  be easily obtained.

\begin{theorem}[Epstein and Levy \cite{EpsteinL10}]\label{teepsteinlevy} 
  Let $L \cup \{r\}$ be a set of rectangles with dimensions at most $1/m$, for $m \geq 2$. 
  If \nfdh\ packs $L$ in one bin, but cannot pack $L\cup\{r\}$ in one bin, 
  then $\area(L)\geq (\frac{m-1}{m})^2$.
\end{theorem}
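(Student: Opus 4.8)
The plan is to analyze the level structure produced by \nfdh\ and to extract two quantitative facts from the hypotheses --- one about the widths of the levels and one about the total height used --- which together force the area to be large. First I would run \nfdh\ on $L\cup\{r\}$ and record the levels $\ell_1,\ldots,\ell_s$ actually created, from bottom to top, with heights $h_1\ge h_2\ge\cdots\ge h_s$ (recall that the height of a level equals the height of its first, hence tallest, rectangle, and that items are processed in non-increasing order of height). Let $W_j$ denote the total width occupied in level $\ell_j$. The first key step is the width bound: whenever a rectangle opens a new level $\ell_{j+1}$, it did so because it did not fit beside the rectangles already in $\ell_j$; since every rectangle has width at most $1/m$, the free width of $\ell_j$ is smaller than $1/m$, so $W_j>(m-1)/m$ for every level that is ``closed'' by a later item. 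I would argue the same bound for the topmost used level by using the first rectangle that \nfdh\ fails to place.

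The second key step is the height bound. Since \nfdh\ cannot complete $L\cup\{r\}$, some rectangle $q$ (of height $h_q\le 1/m$) cannot be placed: it neither fits beside the current top level nor on a fresh level above it, so the used height $H=\sum_j h_j$ satisfies $H+h_q>1$. I would then bound the packed area level by level: because items are sorted by decreasing height, every rectangle in level $\ell_j$ is at least as tall as the first rectangle of $\ell_{j+1}$, so the bottom slab of $\ell_j$ of height $h_{j+1}$ is completely filled and contributes at least $W_j\,h_{j+1}>\frac{m-1}{m}\,h_{j+1}$ to the area. Summing these slab contributions, including the bottom slab of the top level (of height at least $h_q$), telescopes to at least $\frac{m-1}{m}(h_2+\cdots+h_s+h_q)=\frac{m-1}{m}(H-h_1+h_q)$, which by the height bound exceeds $\frac{m-1}{m}(1-h_1)\ge\bigl(\tfrac{m-1}{m}\bigr)^2$, since $h_1\le 1/m$.

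The main obstacle, and the step I would spend the most care on, is passing from the area that \nfdh\ actually places to $\area(L)$ itself. Because \nfdh\ re-sorts $L\cup\{r\}$, the rectangle $q$ that fails need not be $r$: inserting $r$ can cascade one extra level and cause a strictly shorter rectangle of $L$ to be rejected, in which case $r$ is among the placed rectangles and the slab bound above controls $\area(L\cup\{r\})$ rather than $\area(L)$. If the failing rectangle is $r$ itself, then the placed rectangles form a subset of $L$ and the bound transfers immediately, so $\area(L)\ge\bigl(\tfrac{m-1}{m}\bigr)^2$. For the remaining case I would charge $r$'s slab against the area of the rejected rectangle $q$ (note $h_q\le h_r\le 1/m$, since all rectangles taller than $r$ are packed identically in the runs on $L$ and on $L\cup\{r\}$) and verify that excluding $r$'s width from its level still leaves the relevant slabs dense enough for the constant $\bigl(\tfrac{m-1}{m}\bigr)^2$ to survive. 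Controlling this deficit carefully, together with the fact that $h_1\le 1/m$ keeps the lost first-level slab harmless, is exactly where the argument must be made tight.
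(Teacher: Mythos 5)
First, note that the paper does not prove this statement: it is quoted from Epstein and Levy \cite{EpsteinL10}, so your proposal can only be judged on its own terms. The core of your argument --- the level decomposition, the width bound $W_j>\frac{m-1}{m}$ for every closed level, the height inequality $h_1+\cdots+h_s+h_q>1$ for the failing rectangle $q$, and the slab computation giving packed area $>\frac{m-1}{m}(1-h_1)\ge\left(\frac{m-1}{m}\right)^2$ --- is the standard route and is correct as a bound on the area that \nfdh\ has placed at the moment it fails. In the case where the failing rectangle is $r$ itself, the placed rectangles all belong to $L$ and the theorem follows, exactly as you say.

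The gap is the case you flag at the end, where $q\in L$ fails and $r$ is among the placed items; you identify the difficulty correctly, but neither of the two repairs you sketch closes it. Charging against $\area(q)$ fails because $q$ may be far smaller than $r$: one only gets $\area(L)>\left(\frac{m-1}{m}\right)^2-\area(r)+\area(q)$, and $\area(r)$ can be as large as $1/m^2$ while $\area(q)$ is arbitrarily small (you only control $h_q\le h_r$, not $q$'s width). Excluding $r$'s width from its level's slab fails too: $r$ contributes up to $w_r\,h_{j_0+1}\le 1/m^2$ to the bottom slab of the level $\ell_{j_0}$ containing it, so the $L$-only slab area is only guaranteed to exceed $\left(\frac{m-1}{m}\right)^2-\frac{1}{m^2}=\frac{m-2}{m}$, which is $0$ for $m=2$ and falls short of the target for every $m$. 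There is no uniform slack elsewhere in your chain of inequalities to absorb this $1/m^2$ deficit ($h_1$ may equal $1/m$ exactly, and the level widths may exceed $\frac{m-1}{m}$ by arbitrarily little). Closing this case requires a further idea --- for instance exploiting that the item which closes level $\ell_{j_0}$ must have width greater than $1-W_{j_0}$ and height $h_{j_0+1}$, or a level-by-level comparison of the runs of \nfdh\ on $L$ and on $L\cup\{r\}$ --- and until that is supplied the proof is incomplete.
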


\begin{theorem}\label{thm:bdrect}
  For selfish rectangle packing games restricted to rectangles with dimensions 
  at most $1/m$, for $m\geq 2$, we have $\poa(\nfdh)\leq\left(\frac{m}{m-1}\right)^2$.
\end{theorem}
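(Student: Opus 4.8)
The plan is to use Theorem~\ref{teepsteinlevy} to argue that in any Nash equilibrium of the game using \nfdh\ almost every used bin is ``nearly full'', and then to compare the total occupied area against~$\opt$. The heart of the argument is the following claim, which I would prove first: \emph{in a Nash equilibrium $p$ of the game using \nfdh, at most one used bin $k$ satisfies $\area(R(k)) < \left(\frac{m-1}{m}\right)^2$.} To establish it, suppose for contradiction that two distinct used bins $k$ and $\ell$ both have occupied area strictly below $\left(\frac{m-1}{m}\right)^2$, and assume without loss of generality that $\area(R(k)) \le \area(R(\ell))$. Pick any item $i \in R(k)$. Since $\ell$ is a feasible bin, \nfdh\ packs $R(\ell)$ in one bin, and since $\area(R(\ell)) < \left(\frac{m-1}{m}\right)^2$, the contrapositive of Theorem~\ref{teepsteinlevy} guarantees that \nfdh\ also packs $R(\ell)\cup\{i\}$ in one bin (here I use that all dimensions are at most $1/m$). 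Thus $i$ fits into $\ell$, and after the move its cost denominator becomes $\area(R(\ell)) + a_i \ge \area(R(k)) + a_i > \area(R(k))$, so $i$ strictly decreases its cost by migrating from $k$ to $\ell$. This contradicts $p$ being a Nash equilibrium, proving the claim.

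Given the claim, the bound follows by counting. I would first note that a Nash equilibrium has no infeasible bin (an item in an infeasible bin has infinite cost and could move to an empty bin, which \nfdh\ can always fill with a single item), so every used bin is feasible. Let $p$ be a Nash equilibrium using $\soc(p)$ bins; by the claim, at least $\soc(p)-1$ of these bins have occupied area at least $\left(\frac{m-1}{m}\right)^2$, whence the total area of all items satisfies $\sum_{i} a_i \ge (\soc(p)-1)\left(\frac{m-1}{m}\right)^2$. On the other hand, in an optimal packing the items are distributed among $\opt(G)$ unit bins, each holding total area at most~$1$, so $\sum_i a_i \le \opt(G)$. Combining the two inequalities gives
\begin{equation*}
  \soc(p) \le \left(\frac{m}{m-1}\right)^2 \opt(G) + 1 .
\end{equation*}
Dividing by $\opt(G)$ and letting $\opt(G)\to\infty$ in the definition of $\poa$ makes the additive constant vanish, yielding $\poa(\nfdh) \le \left(\frac{m}{m-1}\right)^2$.

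I expect the only delicate points to be the correct invocation of Theorem~\ref{teepsteinlevy}: one must verify that \nfdh\ indeed succeeds on $R(\ell)$ before appealing to its contrapositive (which is exactly why the feasibility of $\ell$ in the equilibrium is needed), and one must ensure the migration is genuinely improving, i.e.\ that the strict inequality $\area(R(\ell)) + a_i > \area(R(k))$ holds — this is precisely where selecting $i$ from the smaller of the two bins, $k$, is used. Everything else is routine bookkeeping, and since the bound degrades only by the additive constant~$1$, the asymptotic statement is unaffected.
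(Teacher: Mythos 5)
Your proposal is correct and matches the paper's intent: the paper states Theorem~\ref{thm:bdrect} without a written proof, remarking only that it ``can be easily obtained'' from Theorem~\ref{teepsteinlevy}, and your argument — using the contrapositive of that theorem to show that at most one used bin in a Nash equilibrium has occupied area below $\left(\frac{m-1}{m}\right)^2$, then comparing total item area against $\opt$ and letting the additive constant vanish asymptotically — is precisely that intended derivation. The delicate points you flag (feasibility of the target bin, strictness of the improvement by choosing the item from the less-occupied bin) are handled correctly.
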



\subsection{Selfish square packing}\label{sec:selfishsquarepacking}

We show in this section a lower bound for the price of anarchy of the
selfish square packing game.  We show a weaker result, and once it is
understood, we mention how a slight improvement can be obtained.

First, we describe an optimal packing $\calo$ that consists of $N$
bins with the same configuration, all completely filled with squares
of side $(1/i) + \eps$, for some different integer values~$i$, and a
very small positive real number $\eps$. (The value of $N$ will be defined
later.) Then, we describe a packing $\calp$ of the same set of squares
that is a Nash equilibrium.

The bins of the optimal packing~$\calo$ have the configuration shown
in Figure~\ref{filowerbound}.  Each bin contains $n_i$ square(s) of
side $s_i$, for $i=1,\ldots, 7$, as defined ahead and shown in
Figure~\ref{filowerbound}; and has the remaining apparently free
space completely filled with small squares, which are like sand.
These small squares have total area $\gamma$ (which is approximately
$0.0488$, considering $\eps$ very small).  These $N$ bins packed in a
completely filled way clearly define an optimal packing.

\hspace{1cm}\begin{minipage}{2.5in}
  \includegraphics[width=2.0in]{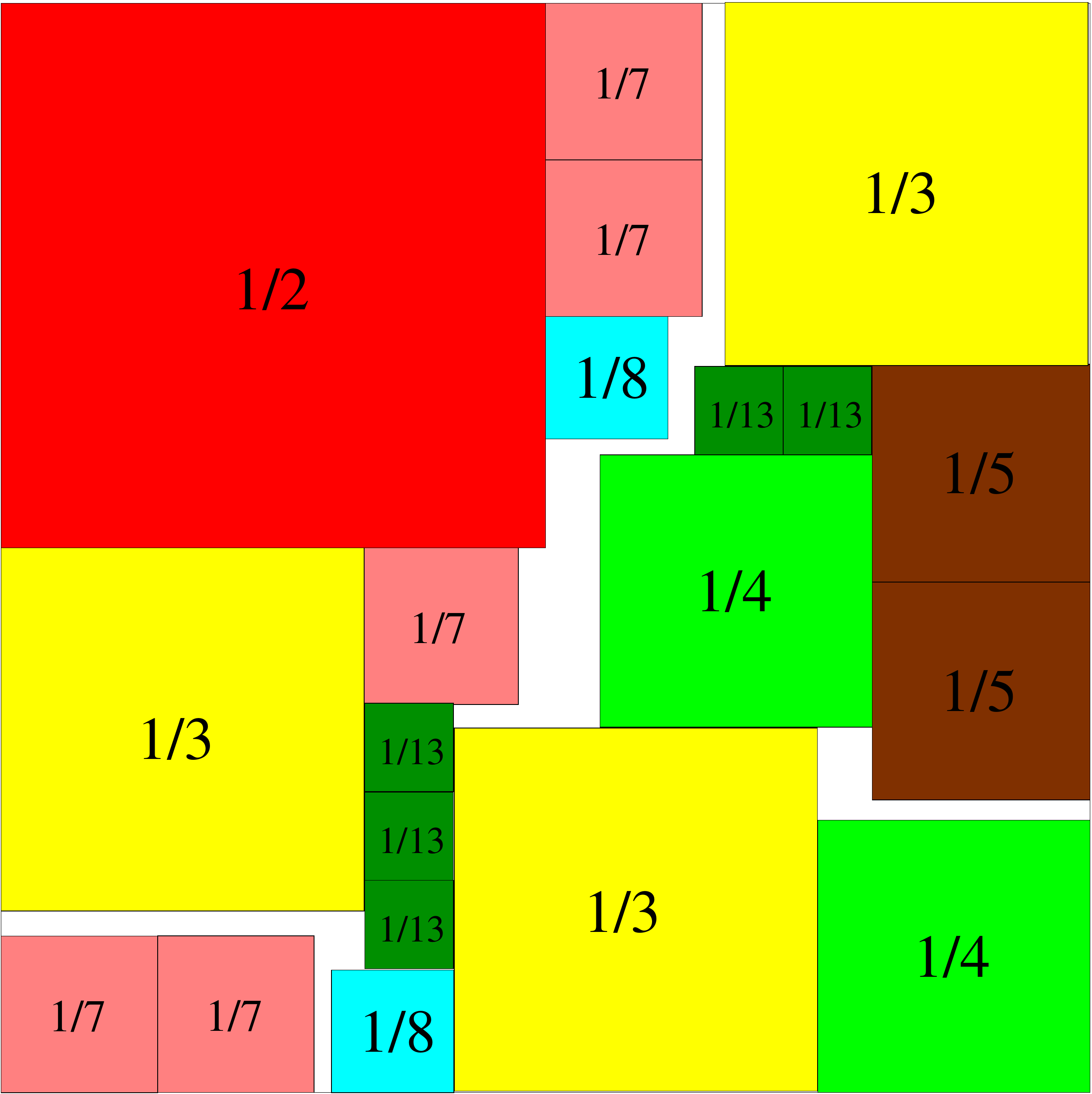} 
  \captionof{figure}{\label{filowerbound}\footnotesize{An optimal packing.\\
                                (Fraction $1/i$ stands for $1/i + \eps$.)}}
\end{minipage}\hspace*{0cm}\begin{minipage}{11cm}
  \hspace*{1.3cm}$n_1:=1$; side $s_1:=1/2 + \eps$; \\
  \hspace*{1.3cm}$n_2:=3$; side $s_2:=1/3 + \eps$; \\
  \hspace*{1.3cm}$n_3:=2$;  side $s_3:=1/4 + \eps$; \\
  \hspace*{1.3cm}$n_4:=2$; side $s_4:=1/5 + \eps$; \\
  \hspace*{1.3cm}$n_5:=5$; $s_5:=1/7 + \eps$; \\
  \hspace*{1.3cm}$n_6:=2$; side $s_6:=1/8 + \eps$; \\
  \hspace*{1.3cm}$n_7:=5$; side $s_7:=1/13 + \eps$; and \\
  \hspace*{1.3cm}very small squares of the same side, \\
  \hspace*{1.3cm}with total area $\gamma:=1-\sum_{i=1}^7n_is_i^2$.\\
\end{minipage}\medskip

Now, let us describe the packing $\calp$. 
It consists of homogeneous bins. For $0 < \ell \leq 1$, we say that an
occupied bin~$B$ is \emph{$\ell$-homogeneous} if all squares packed
in~$B$ have side~$\ell$, and it has the maximal number of such squares.

Note that, for an integer $i>1$, the wasted space is large for a 
$(1/i + \eps)$-homogeneous bin for small $i$ and $\eps$.
Moreover, no square packed in a $(1/j + \eps)$-homogeneous 
bin can migrate to another $(1/i + \eps)$-homogeneous bin.  
Indeed, if $j \leq i$, the square does not fit in the other bin. 
If $j > i$, then, in a $(1/i + \eps)$-homogeneous bin, there are 
$(i-1)^2$ squares and the occupied area is $(i-1)^2(1/i + \eps)^2$. 
Moreover, as $j \geq i+1$, one can check that 
$(1/j+\eps)^2 + (i-1)^2(1/i + \eps)^2 < (j-1)^2(1/j + \eps)^2,$ for $\eps$ small enough.  
So, a configuration consisting of homogeneous bins is a Nash equilibrium.

Now, we define how many homogeneous bin of each type there are, so
that the set of squares packed in the $N$ bins is precisely the set of
squares packed in these homogeneous bins. These numbers are:
\begin {quote}
  $m_1:=N$ bins, each one with precisely one square of side $1/2 + \eps$;\\
  $m_2:=3N/4$ bins, each one with precisely 4 squares of side $1/3 + \eps$;\\
  $m_3:=2N/9$ bins, each one  with precisely 9 squares of side $1/4 + \eps$;\\
  $m_4:=2N/16$ bins, each one with precisely 16 squares of side $1/5 + \eps$;\\
  $m_5:=5N/36$ bins, each one  with precisely 36 squares of side $1/7 + \eps$; \\
  $m_6:=2N/49$ bins, each one  with precisely 49 squares of side $1/8 + \eps$;\\
  $m_7:=5N/144$ bins, each one  with precisely 144 squares of side $1/13 + \eps$; and\\
  $m_8:=\gamma N$ bins, each one completely filled with the very small squares.
\end{quote}
Take $N$ such that each $m_i$ is an integer.  
This configuration is a Nash equilibrium (because of what was argued before), 
and uses $\sum_{i=1}^8m_i > 2.3604\,N$ bins.  
As $\opt=N$, we conclude that $\poa > 2.3604$.

We obtained the configuration in Figure~\ref{filowerbound} in the
following way. We packed first a square of side $1/2+\eps$, then we
packed the largest possible number of squares of side $1/3+\eps$, then
continued packing the largest possible number of squares of side
$1/4+\eps$, and so on. We stopped the process with squares of side
$1/13+\eps$, as the gain from continuing became very small. At this point,
we filled the remaining area with equal size tiny squares (like sand)
as that helps to increase a little more the obtained ratio, and also
because it makes it obvious that the configuration described is an optimal packing.

This result was obtained before we learned that similar ingredients
were used by Epstein and van Stee~\cite{EpsteinS07} to provide a lower
bound for the online bounded space hypercube packing problem.  
In fact, for square packing, the configuration obtained by these authors
gives a better bound for our game problem. While we stopped with
squares of side~$1/13+\eps$, they proceed up to squares of
side~$1/43+\eps$, in the following way (continuing from $n_7$):
$n_8:=2$ squares of side $s_8:=1/14 + \eps$; $n_9:=1$ square of side
$s_9:=1/18 + \eps$; $n_{10}:=2$ squares of side $s_{10}:=1/21 + \eps$;
$n_{11}:=2$ square of side $s_{11}:=1/25+ \eps$; $n_{12}:=24$ squares
of side $s_{12}:=1/43 + \eps$; and likewise considering ``sand'' to
fill the remaining free space.
Considering the configuration just described, and the results obtained by 
Epstein and van Stee~\cite{EpsteinS07}, using the same arguments presented above, 
calculations lead to an improvement on the third decimal digit, as we state below. 
For completeness, and for simplicity, we presented the proof above that gives the 
lower bound $2.3604$.

\begin{theorem}\label{thm236}
  The price of anarchy of the selfish square packing game is at least $2.3634$.
\end{theorem}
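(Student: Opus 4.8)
The plan is to refine the construction already sketched in the text, replacing the author's own stopping point (squares of side $1/13+\eps$) with the longer sequence taken from Epstein and van Stee~\cite{EpsteinS07}, and then redo the bookkeeping with more care. Concretely, I would first fix the optimal packing $\calo$ to consist of $N$ identical bins, each completely filled by the squares with multiplicities $n_1,\dots,n_{12}$ and sides $s_i = 1/i' + \eps$ listed in the excerpt (the $12$ distinct sizes $1/2,1/3,1/4,1/5,1/7,1/8,1/13,1/14,1/18,1/21,1/25,1/43$), together with tiny ``sand'' squares of total area $\gamma := 1 - \sum_i n_i s_i^2$ per bin. The first step is to verify that these squares do fit together in one bin; since the construction is taken from a published online-packing lower bound, I would cite \cite{EpsteinS07} for the geometric feasibility and only check that the areas sum to at most $1$, so that the sand fills exactly the residual area $\gamma$.

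Next I would define the Nash-equilibrium packing $\calp$ of the \emph{same} multiset of squares, using only homogeneous bins exactly as in the excerpt: for each size $s_i = 1/i' + \eps$ one packs the maximal number $(i'-1)^2$ of such squares per bin, so the number of $s_i$-homogeneous bins needed is $m_i := n_i N / (i'-1)^2$, plus $m_{\rm sand} := \gamma N$ sand bins. Choosing $N$ divisible by $\mathrm{lcm}$ of the denominators $(i'-1)^2$ makes every $m_i$ an integer. I would then invoke the inequality already established in the excerpt, namely that for $j > i$ one has $(1/j+\eps)^2 + (i-1)^2(1/i+\eps)^2 < (j-1)^2(1/j+\eps)^2$ for $\eps$ small, to conclude that no square in a $(1/j+\eps)$-homogeneous bin can profitably migrate into a $(1/i+\eps)$-homogeneous bin; combined with the trivial observation that a square cannot migrate into a bin holding equal-or-larger squares (no room) nor strictly gain by joining equal-or-smaller ones, this shows $\calp$ is a Nash equilibrium. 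The sand bins are completely full, so their squares cannot improve either.

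The final and arithmetically delicate step is to evaluate the ratio $\soc(\calp)/\opt = \left(\sum_{i} m_i + \gamma N\right)/N = \sum_i n_i/(i'-1)^2 + \gamma$ and show it exceeds $2.3634$. I would compute $\gamma = 1 - \sum_i n_i s_i^2$ in the limit $\eps \to 0$ (so $s_i \to 1/i'$), then sum the $12$ rational terms $n_i/(i'-1)^2$ together with $\gamma$. Because the bound $2.3634$ differs from the author's $2.3604$ only in the third decimal, the main obstacle is precisely this computation: one must carry enough precision and, crucially, handle the $\eps$-dependence correctly, since the $s_i$ carry the $+\eps$ while the homogeneous-bin counts use the clean $(i'-1)^2$. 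I would argue that as $\eps \to 0^+$ the ratio converges to the stated rational-plus-$\gamma$ value from below (or that for each fixed small $\eps$ the strict inequality $>2.3634$ holds), so that taking $\eps$ small and then $N$ large along the divisibility subsequence yields $\poa > 2.3634$ asymptotically, completing the proof.
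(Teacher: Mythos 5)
Your proposal follows essentially the same route as the paper: the paper proves the $2.3604$ bound in detail with the seven-size configuration and obtains Theorem~\ref{thm236} by exactly the extension you describe, appending the Epstein--van~Stee sizes up to $1/43+\eps$, reusing the homogeneous-bin Nash-equilibrium argument, and recomputing $\sum_i n_i/(i'-1)^2 + \gamma \approx 2.3638 > 2.3634$. Your treatment of the $\eps$-dependence and the divisibility of $N$ is consistent with the paper's, so this is a correct filling-in of the same argument.
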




\medskip 

We now turn to upper bounds for the price of anarchy of the selfish square
packing game. For that, let us consider a result that is similar to
Theorem~\ref{teepsteinlevy}, proved by Meir and Moser~\cite[Thm.~1]{MeirM68}
for the special case in which all items are squares.

\begin{theorem}[Meir and Moser \cite{MeirM68}]\label{NewRef1} 
  Every set $S$ of squares whose largest square has side $\ell$ can be packed 
  by \nfdh\ in a rectangle with dimension $(b,h)$ if $\ell \leq \min\{b, h\}$ 
  and $\area(S) \leq \ell^2 + (b-\ell)(h-\ell)$.
\end{theorem}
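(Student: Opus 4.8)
The plan is to prove the contrapositive: assuming that \nfdh\ fails to pack $S$ inside the $(b,h)$-rectangle, I would derive $\area(S) > \ell^2 + (b-\ell)(h-\ell)$. Sort the squares in non-increasing order of side and let \nfdh\ build levels $L_1,\dots,L_k$, where $h_j$ denotes the height of $L_j$ (equivalently, the side of its first, hence largest, square) and $w_j$ denotes the total width occupied by the squares placed in $L_j$. Since the input is sorted, the level heights satisfy $\ell = h_1 \ge h_2 \ge \dots \ge h_k$, and the largest square fits because $\ell \le \min\{b,h\}$, so at least the first level exists. Let $q$, of side $s$, be the first square that \nfdh\ cannot place, and set $h_{k+1} := s$ for uniformity. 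Failure records two facts: $q$ does not fit beside the squares of the current level, so $w_k + s > b$; and $q$ cannot open a new level, so $\sum_{j=1}^{k} h_j + s > h$.

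Next I would extract the same ``did not fit'' information for every earlier level: for each $j < k$, the first square of $L_{j+1}$ (of side $h_{j+1}$) failed to fit in $L_j$, so $w_j + h_{j+1} > b$, i.e.\ $w_j > b - h_{j+1}$; and the same inequality $w_k > b - h_{k+1}$ holds for $j=k$. The key geometric estimate is a per-level area lower bound. In $L_j$ the first square contributes area $h_j^2$ and occupies width $h_j$; every other square of $L_j$ precedes, in the sorted order, the first square of $L_{j+1}$, hence has side at least $h_{j+1}$, so a square of side $t$ there contributes $t^2 \ge t\,h_{j+1}$. Summing over $L_j$ gives
\[
  \area(L_j) \;\ge\; h_j^2 + h_{j+1}\,(w_j - h_j) \;>\; h_j^2 + h_{j+1}\,(b - h_{j+1} - h_j),
\]
valid for all $j \le k$ (for $j=k$ the ``other'' squares precede $q$, so their sides are at least $s = h_{k+1}$).

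Summing this bound over $j = 1, \dots, k$ and adding the area $s^2 = h_{k+1}^2$ of $q$, the squared terms telescope: $\sum_{j=1}^{k} h_j^2 - \sum_{j=1}^{k} h_{j+1}^2 + h_{k+1}^2 = h_1^2 = \ell^2$. Writing $\sigma := \sum_{j=2}^{k+1} h_j$, this leaves
\[
  \area(S) \;\ge\; \sum_{j=1}^{k}\area(L_j) + s^2 \;>\; \ell^2 + b\,\sigma - \sum_{j=1}^{k} h_j h_{j+1}.
\]
Now I would use the monotonicity $h_j \le h_1 = \ell$ to bound the cross term, $\sum_{j=1}^{k} h_j h_{j+1} \le \ell \sum_{j=1}^{k} h_{j+1} = \ell\,\sigma$, obtaining $\area(S) > \ell^2 + (b-\ell)\,\sigma$. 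Finally, the height failure condition reads $\sum_{j=1}^{k+1} h_j > h$, that is $\sigma > h - \ell$; since $b - \ell \ge 0$, this yields $\area(S) > \ell^2 + (b-\ell)(h-\ell)$, contradicting the hypothesis. (When $b = \ell$ the last step degenerates, but the strict inequality already in place gives $\area(S) > \ell^2 = \ell^2 + (b-\ell)(h-\ell)$, so the conclusion still holds.)

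The main obstacle is choosing the per-level lower bound with exactly the right shape: it must isolate the contribution of each level's leading square, so that the $h_j^2$ terms telescope into the single target term $\ell^2$, while estimating the remaining squares against the \emph{next} level height $h_{j+1}$, so that the ``did not fit'' width inequalities $w_j > b - h_{j+1}$ become applicable. The residual cross term $\sum_j h_j h_{j+1}$ is then precisely what the crude but sufficient bound $h_j \le \ell$ controls, converting $b\,\sigma$ into $(b-\ell)\,\sigma$ and matching the area of the L-shaped region $\ell^2 + (b-\ell)(h-\ell)$.
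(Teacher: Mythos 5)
Your proof is correct. Note that the paper itself offers no proof of this statement: it is imported verbatim from Meir and Moser \cite{MeirM68} (Theorem~\ref{NewRef1}) and used as a black box, so there is no in-paper argument to compare against. What you have written is essentially the classical level-accounting proof of that result: the contrapositive setup, the ``did not fit'' inequalities $w_j > b - h_{j+1}$ and $\sum_{j\le k} h_j + s > h$, the per-level bound $\area(L_j) \ge h_j^2 + h_{j+1}(w_j - h_j)$ obtained by charging each non-leading square of side $t$ its area $t^2 \ge t\,h_{j+1}$, the telescoping of the $h_j^2$ terms into $\ell^2$, and the crude bound $h_j \le \ell$ on the cross terms to convert $b\sigma$ into $(b-\ell)\sigma \ge (b-\ell)(h-\ell)$. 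I checked the details: the strictness of the inequalities is preserved throughout (squares have positive side, and ``does not fit'' gives strict overflow), the degenerate case $b=\ell$ is handled, and the case $k=1$ causes no trouble. The only point worth making explicit is that the argument relies on NFDH being a \emph{next-fit} rule --- once a level is closed its width $w_j$ never changes, so the width recorded when the first square of $L_{j+1}$ was rejected is indeed the final $w_j$; you use this implicitly and it is valid here.
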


\begin{corollary}\label{temeirmoser} 
  Let $S$ be a set of squares, each with side at most $1/m$ for $m \geq 2$.
  If ${a(S) \leq \frac{1}{m^2}+(1-\frac{1}{m})^2}$, then \nfdh\ can pack $S$ in
  a unit square.
\end{corollary}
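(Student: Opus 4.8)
The plan is to apply Theorem~\ref{NewRef1} directly, taking the target rectangle to be the unit square, i.e. $b=h=1$. First I would let $\ell$ denote the side of the \emph{largest} square in $S$. Since every square has side at most $1/m$ and $m\geq 2$, we have $\ell\leq 1/m\leq 1/2$, so the first hypothesis of Theorem~\ref{NewRef1}, namely $\ell\leq\min\{b,h\}=1$, is immediate.

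The second hypothesis of Theorem~\ref{NewRef1} asks that $\area(S)\leq \ell^2+(b-\ell)(h-\ell)=\ell^2+(1-\ell)^2$. The one genuine point is that $\ell$ is the actual largest side, which may be strictly smaller than $1/m$, whereas the hypothesis of the corollary only bounds $\area(S)$ in terms of $1/m$. To bridge this gap I would analyze the function $g(\ell):=\ell^2+(1-\ell)^2=2\ell^2-2\ell+1$. Its derivative $g'(\ell)=4\ell-2$ is negative on $(0,1/2)$, so $g$ is strictly decreasing on $(0,1/2]$. Because $\ell\leq 1/m\leq 1/2$, monotonicity gives $g(\ell)\geq g(1/m)=\frac{1}{m^2}+\bigl(1-\tfrac1m\bigr)^2$. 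Combining this with the hypothesis $\area(S)\leq \frac{1}{m^2}+\bigl(1-\tfrac1m\bigr)^2$ yields $\area(S)\leq g(1/m)\leq g(\ell)=\ell^2+(1-\ell)^2$, which is exactly the area condition required. Both hypotheses of Theorem~\ref{NewRef1} thus hold with $b=h=1$, so \nfdh\ packs $S$ into the unit square, as claimed.

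I expect the only subtle step to be this monotonicity argument: one must observe that the permitted area bound $g(\ell)$ actually \emph{grows} as the largest square shrinks, so the worst case is precisely $\ell=1/m$, where the corollary's hypothesis coincides with the Meir–Moser bound. Everything else is a routine specialization of Theorem~\ref{NewRef1} to the unit square.
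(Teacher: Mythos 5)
Your proof is correct and is essentially the specialization of Theorem~\ref{NewRef1} to $b=h=1$ that the paper intends (the corollary is stated without proof as an immediate consequence). The monotonicity observation that $\ell^2+(1-\ell)^2$ is decreasing on $(0,1/2]$, so the bound only improves when the largest square has side strictly less than $1/m$, is exactly the right way to close the only small gap.
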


Using this result, we can prove that $\poa(\nfdh)\leq 4$. For that, it
suffices to note that all bins, except possibly one, have an occupied
area of at least~$1/4$. This area occupation is valid for bins containing 
one item of side greater than $1/2$, and also for the other bins containing 
only squares with side at most $1/2$ (by Theorem~\ref{teepsteinlevy}). 
In what follows, we will prove a better upper bound for $\poa(\nfdh)$.

\smallskip

The next lemma is the main part of the proof of a result stated without 
a proof in the preliminary version of this paper~\cite[Lemma 3.3]{FFMW11}.  
Its proof, as well as that of the following theorem, uses the following
terminology. 

A square is {\it big} if it has side larger than $1/2$; {\it medium} if it 
has side larger than $1/3$ and at most $1/2$; and {\it small} if it has side 
at most~$1/3$. 

\begin{lemma}\label{lequatronono}
  Let $y$ be a small square and $X$ be a set of small or medium squares. 
  If $\nfdh$ cannot pack $X\cup\{y\}$ in one bin, then $\area(X)\geq 4/9$.
\end{lemma}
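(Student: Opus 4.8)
The plan is to run \nfdh\ on the whole set $T:=X\cup\{y\}$ and to argue according to the largest side $\ell_1$ occurring in $T$; recall $y$ has side $s\le 1/3$, so $\area(y)\le 1/9$, and every square of $X$ has side at most $1/2$. If $\ell_1\le 1/3$ then every square of $T$ is small, and since \nfdh\ fails on $T$, Corollary~\ref{temeirmoser} with $m=3$ forces $\area(T)>\frac1{9}+\big(\frac{2}{3}\big)^2=\frac59$, hence $\area(X)=\area(T)-s^2>\frac59-\frac19=\frac49$. So I may assume $\ell_1>1/3$, i.e.\ $X$ contains at least one \emph{medium} square; let $p\ge 1$ be their number. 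As each medium square has area larger than $1/9$, the case $p\ge 4$ is immediate: $\area(X)>4\cdot\frac19=\frac49$. The remaining cases $p\in\{1,2,3\}$ are the heart of the matter.

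For those cases I would use the shelf structure of \nfdh. Three medium squares have total width exceeding $3\cdot\frac13=1$, so at most two of them lie in one level; consequently the medium squares occupy only the first level (two of them) and, when $p=3$, the start of the second level. Writing $h_1=\ell_1$ for the height of the first level, every square packed above it is confined to the sub-rectangle of dimensions $(1,\,1-\ell_1)$, and the first square that \nfdh\ fails to place (which one checks is small, of side $\rho\le 1/3$) is precisely the square on which \nfdh\ fails when restricted to that sub-rectangle. Applying the Meir--Moser bound (Theorem~\ref{NewRef1}) in the box $(1,\,1-\ell_1)$ therefore shows that the squares lying strictly above the first level, together with the failing square, have total area exceeding $\ell'^2+(1-\ell')(1-\ell_1-\ell')$, where $\ell'$ is the largest side appearing above the first level. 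Adding the areas of the first-level squares and subtracting $s^2$ gives a lower bound for $\area(X)$ as an explicit function of the relevant side lengths, which I would then minimise.

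The key (and tight) case is $p=3$. Here the third medium square opens the second level before any small square is examined, so the first level contains \emph{only} the two largest medium squares and the accounting is clean: with $\ell'=\ell_3$ one gets $\area(T)>\ell_1^2+\ell_2^2+\ell_3^2+(1-\ell_3)(1-\ell_1-\ell_3)$. It then remains to verify that this expression is at least $\frac59$ over the simplex $\frac13<\ell_3\le\ell_2\le\ell_1\le\frac12$; its minimum is $\frac59$, attained in the limit $\ell_1,\ell_2,\ell_3\to\frac13$, and subtracting $s^2\le\frac19$ yields $\area(X)\ge\frac49$. That this case is genuinely tight is shown by three medium squares of side $\frac13+\eps$ and a small square of side $\frac13$, for which \nfdh\ fails while $\area(X)\to\frac49$; so the bound must be established exactly, with no slack, and this is what I expect to be the main obstacle.

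The cases $p\in\{1,2\}$ carry more slack but an extra technical difficulty: since the one or two medium squares leave room to their right in the first level, small squares may be packed there, and they cannot simply be discarded without losing too much. I would recover their contribution by using that each level is filled to a width exceeding $1$ minus the side of the first square of the next level, and fold this into the same area count. A further point of care is that, when $1-\ell_1$ is small, the plain Meir--Moser threshold for the top sub-rectangle is weak, so the estimate there should be strengthened by the near-fullness of the small-square levels (their total height exceeds $1-\rho-\ell_1$ and each has width close to $1$); with this strengthening the minimised bound again reaches $\frac49$, completing the proof.
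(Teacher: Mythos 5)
Your overall strategy is the same as the paper's: dispose of the all-small case via Corollary~\ref{temeirmoser} with $m=3$, split on the number $p$ of medium squares, and use the shelf structure of \nfdh\ together with Theorem~\ref{NewRef1} on the sub-box above the packed levels. Your treatment of $p=3$ is in fact a little more direct than the paper's (which descends to the region above the \emph{second} level and needs the auxiliary inequality $x_1+x_3+x_4\le 1$): applying Meir--Moser once, above the first level, with largest remaining side $\ell_3$, your bound $\area(T)>\ell_1^2+\ell_2^2+\ell_3^2+(1-\ell_3)(1-\ell_1-\ell_3)$ is valid (note $\ell_3\le 1/2\le 1-\ell_1$, so the theorem applies), and the stated minimum $5/9$ at $\ell_1=\ell_2=\ell_3=1/3$ checks out (the function is convex on the simplex and the corner is its minimizer). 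The $p\ge 4$ and all-small cases are fine.

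The genuine gap is in $p\in\{1,2\}$, exactly where you suspect trouble. There your base estimate provably falls short: with one medium square of side $x_1=1/2$ in the first level and largest remaining side $r=1/3$, the bound $\area(T)>x_1^2+r^2+(1-r)(1-x_1-r)=\tfrac14+\tfrac19+\tfrac19\approx 0.472$ gives only $\area(X)>0.472-\tfrac19\approx 0.361<\tfrac49$ after subtracting $\area(y)$. Your proposed remedies (width-fullness of levels, strengthening Meir--Moser when $1-\ell_1$ is small) are left as intentions, not carried out, so the proof is incomplete as written. The fix the paper uses is simpler than what you sketch and worth knowing: since $x_1\le 1/2$, the first level $Z$ contains at least \emph{two} squares, and its second-largest square has side at least that of $y$ (the squares of $Z$ form a prefix of the sorted order, or $y$ itself lies in $Z$); hence $\area(Z)-y^2\ge x_1^2$, and one bounds $\area(X)=\area(T)-y^2>x_1^2+r^2+(1-r)(1-x_1-r)$ directly, without paying the full $1/9$ against the Meir--Moser term. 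This expression equals $x_1^2+2r^2+x_1r-x_1-2r+1\ge \tfrac12-\tfrac r6\ge\tfrac49$ for $x_1\le\tfrac12$ and $r\le\tfrac13$, which closes the case. With that one observation inserted, your plan becomes a complete proof along essentially the paper's lines.
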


\begin{proof}
  For simplicity, for a square~$x$, we use $x$ to refer to its side.  
  Suppose by contradiction that $\area(X) < 4/9$. This in particular 
  means that $X$ contains at most three medium squares. 

  Let us first argue that $X$ contains at least one medium square. If every
  square in $X$ is small, then we can apply Corollary~\ref{temeirmoser} to $X
  \cup \{y\}$, with $m=3$. Indeed, as $\area(X \cup \{y\}) < 4/9 + 1/9 = 5/9 =
  1/m^2 + (1-1/m)^2$, \nfdh\ packs all squares from $X \cup \{y\}$ in a bin, a
  contradiction. So~$X$ contains at least one medium square.

  Now let us prove that $X$ should contain exactly three medium squares.
  Suppose by contradiction that $X$ contains one or two medium squares. Let
  $X^+= X \cup \{y\}$. By assumption, \nfdh\ fails to pack~$X^+$. Let us analyze
  why this happens.

  Observe that all squares have side at most $1/2$, thus \nfdh\ fails to pack a
  square only after the second level. Let $Z$ be the set of squares in $X^+$
  that \nfdh\ packs in the first level.  Let $x$ be a largest square in $Z$, and
  $r$ be a largest square in $X^+\setminus Z$. Note that $r \leq 1/3$, because
  $X^+$ contains at most two medium squares, which are in $Z$.
  
  Let $R$ be the rectangle of dimension $(1,1-x)$ (corresponding to the region
  of a bin above a first level of height $x$). As \nfdh\ fails to pack $X^+$ in
  a bin, it also fails to pack $X^+\setminus Z$ in $R$. Hence, by
  Theorem~\ref{NewRef1}, $\area(X^+\setminus Z) > r^2 + (1-r)(1-x-r)$.
  Thus,
  \begin{eqnarray*}
    \area(X) = \area(X^+) - y^2  > r^2 + (1-r)(1-x-r) + \area(Z) - y^2 
             \geq  r^2 + (1-r)(1-x-r) + x^2.
  \end{eqnarray*}

  The last inequality follows from the fact that $\area(Z)-y^2 \geq x^2$.
  This holds because there are at least two squares in $Z$ and the second
  largest square in $Z$ is at least as large as $y$.  Hence,
  \begin{eqnarray*}
    \area(X) & > & x^2 + 2r^2 - x - 2r + xr + 1 \\
             & = & (x - 1/3)^2 + 2(1/3 - r)^2 - x(1/3 - r)  - (2/3) r + 2/3 \\
             & \geq & 2/3- 1/2(1/3 - r) - (2/3) r  \quad \mbox{(because $x \leq 1/2$)} \\
             & = & 1/2 - r/6 \ \geq \ 4/9 \quad \mbox{because $r \leq 1/3$.}
  \end{eqnarray*}
  But this is a contradiction, because $a(X) < 4/9$.

  So $X$ contains three medium squares. As before, let $X^+= X \cup \{y\}$. By
  assumption, \nfdh\ fails to pack~$X^+$. Let us analyze why this happens. 
  Again, \nfdh\ fails to pack a square in $X^+$ only after the second
  level.  Also, as $X^+$ contains three squares in $M$, \nfdh\ packs exactly two
  squares in the first level (two of the squares in $X^+ \cap M$). Let us argue
  that \nfdh\ packs at least three squares in the second level.

  Let $x_1,x_2,\ldots$ be the squares in $X$ sorted so that $x_1 \geq x_2 \geq
  \cdots$ and let $x_i=0$ if $i > |X|$. If $X$ had at most three squares, \nfdh\
  would suceed in packing~$X^+$. So $X$ has at least four squares, and $|X^+|
  \geq 5$. To prove that there are at least three squares in the second level,
  it is enough to show that $x_3 + x_4 \leq 2/3$. Indeed, if $x_3 + x_4 \leq 2/3$,
  as $y \leq 1/3$ and $x_5 \leq 1/3$, clearly $x_3 + x_4 + \max\{y,x_5\} \leq 1$. 
  This means that \nfdh\ packs at least three squares in the second level. 
  (Hence $|X| \geq 5$, otherwise \nfdh\ would not fail to pack $X^+$.)

  In fact, we will prove a stronger statement that will help us also at another
  point of the proof. We will prove (by contradiction) that $x_1 + x_3 + x_4 \leq 1$. 
  This implies that $x_3 + x_4 \leq 1 - x_1 < 2/3$.  Suppose that $x_1+x_3+x_4 > 1$, 
  that is, $x_4 > 1-x_1-x_3$. Then
  $$\area(X)  >  x_1^2 + 2x_3^2 + (1-x_1-x_3)^2 
              =  2(x_1-1/3)^2 + 3(x_3-1/3)^2 + 4/9 + 2x_1x_3 - (2/3)x_1.$$

  But $2x_1x_3 - (2/3)x_1 \geq 0$ because $x_3 > 1/3$.  
  So, we can conclude that $\area(X) > 4/9$, a contradiction. 
  Therefore $x_1+x_3+x_4 \leq 1$, as we wished to prove.

  Let $Z$ be the set of squares in $X^+$ that \nfdh\ packs in the first two levels. 
  As \nfdh\ fails to pack $X^+$, the set $X^+ \setminus Z$ is non-empty. 
  Let $r$ be a largest square in $X^+ \setminus Z$.  Note that $r \leq x_4$, 
  because $x_4$ is certainly packed in the second level. Thus $r \leq 1-x_1-x_3$ 
  because $x_1+x_3+x_4 \leq 1$.

  The heights of the first and second levels are $x_1$ and $x_3$ respectively. 
  So let $R$ be the rectangle of dimension $(1,1-x_1-x_3)$, corresponding to the 
  region of the bin above the first two levels.  As \nfdh\ fails to pack $X^+$, 
  we know that it also fails to pack $X^+ \setminus Z$ in~$R$. Hence, using 
  Theorem~\ref{NewRef1}, we derive that $\area(X^+\setminus Z) > r^2 + (1-r)(1-x_1-x_3-r)$, 
  and
  \begin{eqnarray}
    \area(X) & = & \area(X^+) - y^2 \nonumber \\
             & > & r^2 + (1-r)(1-x_1-x_3-r) + \area(Z) - y^2 \nonumber \\ 
             & \geq & x_1^2 + 2x_3^2 + 2r^2 + (1-r)(1-x_1-x_3-r). \label{eq00}
  \end{eqnarray}
  Inequality \eqref{eq00} holds because $\area(Z)-y^2$ is at least the sum of the
  areas of the four largest squares in $X$, which is at least $x_1^2+2x_3^2+r^2$.
  Now, as $r \leq x_1$, from the inequalities above, we get that 
  \begin{eqnarray}
    \area(X) & > & x_1^2 + 2x_3^2 + 2r^2 + (1-x_1)(1-x_1-x_3-r) \nonumber \\ 
             & = & x_1^2 + (1-x_1)^2 + 2x_3^2 + 2r^2 - (1-x_1)(x_3+r) \nonumber \\ 
             & \geq & 1/2 + 2x_3^2 + 2r^2 - (1-x_3)(x_3+r) \label{eq01} \\ 
             & = & 3x_3^2 + 2r^2 - x_3 - r + x_3r + 1/2 \nonumber \\ 
             & = & 3(x_3-1/6)^2 + 2(r-1/6)^2 + x_3r - r/3 +13/36 \nonumber \\
             & \geq & 3(1/6)^2 + r/3 - r/3 + {13}/{36} \ = 4/9. \label{eq02}
  \end{eqnarray}
  Note that inequality~\eqref{eq01} holds because $x_1^2 + (1-x_1)^2 \geq 1/2$ 
  for every~$x_1$, and \eqref{eq02} holds for every $x_3 \geq 1/3$. 
  The equalities are plain mathematical manipulation.  But $\area(X) > 4/9$ 
  contradicts the initial hypothesis, and we conclude the proof that, 
  if $\nfdh$ cannot pack $X\cup\{y\}$ in one bin, then $\area(X)\geq 4/9$.
\end{proof}

Note that the bound $4/9$ in Lemma~\ref{lequatronono} is best possible. 
To see this, consider $X$ consisting of four squares of side $1/3+\eps$, 
and $y$ having side $1/3$.

\begin{theorem}\label{lesquaresarea}
  For the selfish square packing game with all squares having side at most
  $1/2$, in any Nash equilibrium, the area used in each bin is at least $4/9$,
  except for at most two bins.
\end{theorem}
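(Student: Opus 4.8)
The plan is to classify the under-filled bins into two types and bound each type by a single bin, combining Lemma~\ref{lequatronono} with the equilibrium condition. Since all squares have side at most $1/2$, every square is either \emph{small} (side at most $1/3$) or \emph{medium} (side in $(1/3,1/2]$). Call a used bin \emph{deficient} if its occupied area is below $4/9$. First I would split the deficient bins into type~S, those containing at least one small square, and type~M, those packed only with medium squares. A type~M deficient bin holds at most three squares, since four medium squares already occupy area larger than $4\cdot(1/3)^2 = 4/9$.

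Next I would isolate the two facts that allow a square to migrate \emph{into} a deficient bin. First, if $B$ is deficient and $y$ is a small square lying in a different bin, then $\nfdh$ packs $R(B)\cup\{y\}$; otherwise Lemma~\ref{lequatronono}, applied with $X = R(B)$ (all of whose squares are small or medium), would force $\area(R(B)) \ge 4/9$, contradicting deficiency. Second, if $B$ is a deficient type~M bin and $y$ is any medium square, then $R(B)\cup\{y\}$ is a set of at most four medium squares, and $\nfdh$ packs any such set: it places the two largest side by side on a first level of height at most $1/2$, the third and fourth squares do not fit on that level (three sides exceeding $1/3$ sum to more than $1$), and they are placed side by side on a second level, which fits since both level heights are at most $1/2$. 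In either case $y$ fits in $B$, so in a Nash equilibrium the owner of $y$ has no incentive to leave its current bin $B'$ for $B$; since $\nfdh$-feasibility implies feasibility, the move is available in the game, and equilibrium gives $\area(R(B)) + y^2 = \area(R(B)\cup\{y\}) \le \area(R(B'))$, i.e.\ $\area(R(B)) < \area(R(B'))$ strictly.

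Finally I would run a symmetric argument inside each type. If two deficient type~S bins $B_1, B_2$ existed, with small squares $y_1 \in B_1$ and $y_2 \in B_2$, the first migration fact applied both ways yields $\area(R(B_2)) + y_1^2 \le \area(R(B_1))$ and $\area(R(B_1)) + y_2^2 \le \area(R(B_2))$; summing these gives $y_1^2 + y_2^2 \le 0$, a contradiction. The identical reasoning, using the second migration fact, rules out two deficient type~M bins. Hence there is at most one deficient bin of each type, so at most two deficient bins in total, which is the claim. (The same argument applies verbatim to the exact game~SSPG, because whenever $\nfdh$ succeeds the squares are certainly packable, so the migrating move is available there as well.) The step I expect to require the most care is the second migration fact: one must verify that $\nfdh$ genuinely succeeds on every set of at most four medium squares, and—crucially—that a deficient target never already contains four such squares, which is exactly what the area bound used in the classification guarantees.
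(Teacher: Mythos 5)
Your proof is correct and follows essentially the same route as the paper's: both split the under-filled bins into those with a small square (handled via Lemma~\ref{lequatronono}) and those with only medium squares (handled by noting that any four medium squares pack together), and both use the no-incentive condition to rule out two deficient bins of each type. The only cosmetic differences are that you sum the two migration inequalities instead of ordering the bins by occupied area, and you spell out explicitly why \nfdh\ packs four medium squares and why the argument transfers to the exact game.
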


\begin{proof}
  Consider a Nash equilibrium of such a game. Let us call \emph{bad} 
  a used bin whose used area is less than~$4/9$. We will prove that the
  number of bad bins is at most two.

  There is at most one bad bin that contains only medium squares. This is
  true because any bad bin can contain at most three medium squares. As
  \nfdh\ suceeds in packing any four medium squares in a bin, there cannot
  exist two bad bins containing only medium squares (because if they existed,
  then any square from the less used bin could migrate to the most used one).

  Let us prove that there is at most one bad bin with at least one small
  square. If we prove this, the number of bad bins is at most two (one with
  only medium squares and the other containing at least one small square).

  Suppose there are two bad bins, $B_X$ and $B_Y$, with at least one small
  square each. Let $X=R(B_X)$ and $Y=R(B_Y)$. Suppose $\area(X) \geq \area(Y)$,
  and let $y$ be a small square in~$Y$. By Lemma~\ref{lequatronono}, if $y$
  cannot migrate to $B_X$, then $\area(X)\geq 4/9$, which is a contradiction.
\end{proof}

Observe that the theorem cannot be improved to state that all but one bin have
occupied area of at least $4/9$. Indeed, consider one bin with one square of
side $1/2$ and another bin with three squares of side $1/3+\eps$ and two of
side $1/6$. Both such bins are bad, and together define  a configuration
which is a Nash equilibrium.

The following lemma will be used in the proof of the next two theorems.
\begin{lemma}[Lemma 4.4~\cite{KohayakawaMRW2004}]\label{lem:gamadelta}
  For every real numbers $a$, $b$, $\gamma$, $\delta$ with $a > 0$ and $0 < \gamma < \delta < 1$, 
  $$ \frac{a+b}{\max\{a, \gamma a + \delta b\}} \ \le \ 1 + \frac{1-\gamma}\delta. $$
\end{lemma}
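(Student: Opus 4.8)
The plan is to remove the maximum in the denominator by splitting into two cases according to whether $a$ or $\gamma a+\delta b$ is the larger quantity. The key observation is that in each case the desired bound, after clearing the (positive) denominator, collapses to exactly the inequality that defines the case, so both cases close immediately. I would begin by noting that $\max\{a,\gamma a+\delta b\}\ge a>0$, so the left-hand side is well defined and there is no need to assume anything about the sign of $b$.

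In the first case I would assume $a\ge \gamma a+\delta b$, which is equivalent to $\delta b\le(1-\gamma)a$. Here the maximum equals $a$, and the claimed inequality becomes $(a+b)/a\le 1+(1-\gamma)/\delta$, i.e. $b/a\le(1-\gamma)/\delta$. Dividing the case hypothesis by $\delta a>0$ gives exactly this, so the bound holds, with equality precisely when $\delta b=(1-\gamma)a$.

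In the second case I would assume $\gamma a+\delta b\ge a$, equivalently $\delta b\ge(1-\gamma)a$; now the maximum equals $\gamma a+\delta b>0$, and I would cross-multiply against it and against $\delta>0$. The target inequality becomes $\delta(a+b)\le(\delta+1-\gamma)(\gamma a+\delta b)$. Expanding the right-hand side and subtracting the left, the coefficient of $a$ simplifies to $(1-\gamma)(\gamma-\delta)$ and the coefficient of $b$ to $\delta(\delta-\gamma)$, so the whole difference factors as
\[
(\delta-\gamma)\bigl(\delta b-(1-\gamma)a\bigr).
\]
Since $\delta>\gamma$, this is nonnegative exactly when $\delta b\ge(1-\gamma)a$, which is precisely the case hypothesis; hence the bound holds here as well, again with equality on the boundary $\delta b=(1-\gamma)a$.

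The only genuine computation is the factorization in the second case: checking that $(\delta+1-\gamma)(\gamma a+\delta b)-\delta(a+b)$ equals $(\delta-\gamma)(\delta b-(1-\gamma)a)$. This is routine expansion using the identities $\delta\gamma+\gamma-\gamma^2-\delta=(1-\gamma)(\gamma-\delta)$ and $(\delta+1-\gamma)\delta-\delta=\delta(\delta-\gamma)$, and it is the step I would write out most carefully, since it is where the ordering $\gamma<\delta<1$ and the positivity $a>0$ actually get used. No other estimates or auxiliary results are needed.
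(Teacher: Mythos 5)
Your proof is correct: the two-case split on which term achieves the maximum, with the second case reducing to the factorization $(\delta+1-\gamma)(\gamma a+\delta b)-\delta(a+b)=(\delta-\gamma)\bigl(\delta b-(1-\gamma)a\bigr)$, is a complete and valid argument (and I verified the algebra). The paper itself gives no proof of this lemma --- it is quoted verbatim from the cited reference \cite{KohayakawaMRW2004} --- so there is nothing in-paper to compare against; your elementary derivation is exactly the standard one and fills that gap.
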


Now we are ready to prove the promised improved upper bound for $\poa(\nfdh)$.


\begin{theorem}\label{thm315}
  For the selfish square packing game, $\poa(\nfdh)\leq 2.6875$.
\end{theorem}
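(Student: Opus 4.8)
The plan is to classify the bins of a Nash equilibrium $p$ of the full game into two groups and charge their count against two separate lower bounds for $\opt$, combining the two via Lemma~\ref{lem:gamadelta} with the parameters $\gamma=1/4$ and $\delta=4/9$; note that $1+(1-\gamma)/\delta = 1+(3/4)/(4/9) = 43/16 = 2.6875$, which is exactly the target.

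First I would separate the used bins into the set of \emph{big bins}, each containing a square of side larger than $1/2$, and the remaining bins, which contain only squares of side at most $1/2$. Let $n_B$ and $n_S$ be their respective numbers, so $\soc(p)=n_B+n_S$. A preliminary geometric observation is that a unit bin cannot hold two squares of side larger than $1/2$: any two subintervals of $[0,1]$ of length exceeding $1/2$ intersect, so two such squares have overlapping $x$-projections and overlapping $y$-projections, and being axis-aligned they would overlap. Consequently each big bin holds exactly one big square, and the same bound forces every bin of an optimal packing to hold at most one big square. Therefore $n_B$ equals the number of big squares and $\opt\geq n_B$.

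Next I would establish the two area estimates. Since a big square has area larger than $1/4$, the total occupied area $A_B$ of the big bins satisfies $A_B > \frac14 n_B$. For the small/medium bins I would invoke the reasoning of Theorem~\ref{lesquaresarea}: although that theorem is stated for the game in which all squares have side at most $1/2$, its proof only compares the occupied areas of two bins containing small/medium squares and moves a single such square from one to the other, a migration that remains an improving move in the full game. The Nash condition therefore still rules out the relevant pair of bad bins, and Lemma~\ref{lequatronono} (through Corollary~\ref{temeirmoser}) applies verbatim, so all but at most two of the small/medium bins have occupied area at least $4/9$. Writing $A_S$ for the total area of these bins, this gives $A_S \geq \frac49(n_S-2)$, that is $\frac49 n_S \leq A_S + \frac89$. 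Because no bin of an optimal packing has area exceeding $1$, the total occupied area satisfies $A_B+A_S \leq \opt$. Combining, $\frac14 n_B + \frac49 n_S \leq A_B + A_S + \frac89 \leq \opt + \frac89$, while $n_B \leq \opt$, so $\max\{n_B,\ \frac14 n_B + \frac49 n_S\} \leq \opt + \frac89$. Applying Lemma~\ref{lem:gamadelta} with $a=n_B$, $b=n_S$, $\gamma=\frac14$, $\delta=\frac49$ (which satisfy $0<\gamma<\delta<1$) then yields
\[
  \soc(p) = n_B+n_S \ \leq\ \Bigl(1+\tfrac{1-\gamma}{\delta}\Bigr)\max\Bigl\{n_B,\ \tfrac14 n_B+\tfrac49 n_S\Bigr\} \ \leq\ \tfrac{43}{16}\bigl(\opt+\tfrac89\bigr).
\]
Passing to the asymptotic ratio absorbs the additive constant and gives $\poa(\nfdh)\leq 43/16 = 2.6875$; the degenerate case $n_B=0$ (needed for $a>0$ in the lemma) is handled separately and only improves the bound, since then $\soc(p)\leq \frac94\opt + 2$.

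The main obstacle I anticipate is not the arithmetic but the transfer of Theorem~\ref{lesquaresarea} to the full game: I must check carefully that the exchange arguments used there (a medium square moving between medium-only bins, and a small square moving between two bins each containing a small square) remain valid improving migrations once big squares are present, and that no interaction with the big bins is required. The geometric fact that big squares are solitary, together with the fortunate coincidence that the constants $1/4$ and $4/9$ coming from the two bin types match the parameters $\gamma$ and $\delta$ of Lemma~\ref{lem:gamadelta}, is precisely what makes the two charging bounds combine to the exact value $2.6875$.
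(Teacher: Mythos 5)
Your proposal is correct and follows essentially the same route as the paper: the same split into bins with a big square versus the rest, the same area bounds $1/4$ and $4/9$ (the latter via Theorem~\ref{lesquaresarea} with its two exceptional bins), and the same combination through Lemma~\ref{lem:gamadelta} to reach $43/16$. The only differences are bookkeeping (you fold the additive constant into $\opt+8/9$ where the paper keeps it as $2/\opt$) and your explicit justification that Theorem~\ref{lesquaresarea} transfers to configurations containing big squares, a point the paper uses implicitly.
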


\begin{proof}
  Consider a configuration $p$ that is a Nash equilibrium for the selfish square packing game
  using~\nfdh. Let $\calb$ be the set of bins in this configuration containing a big square,
  and let $\calc$ be the set of the remaining used bins.
  Bins in $\calb$ have an occupied area of at least~$1/4$. By
  Theorem~\ref{lesquaresarea}, at least $|\calc|-2$ bins in $\calc$
  have an occupied area of at least $4/9$. Set $N_B= |\calb|$ and
  $N_C = |\calc|-2$. Thus,
  $\opt \geq \area(\calb) + \area(\calc) \geq
  \frac{1}{4}N_B+\frac{4}{9}N_C$.
  On the other hand, $\opt \geq N_B$, as each big square has to be
  packed in a different bin.  Thus,
  $\opt \geq \max\{N_B; \frac{1}{4}N_B+\frac{4}{9}N_C\}$.

  Putting together these results, we have that
  $$\frac{sc(p)}{\opt}  =  \frac{N_B+N_C+2}{\opt}
           \leq \frac{N_B+N_C}{\max\{N_B\, ; \, \frac{1}{4}N_B+\frac{4}{9}N_C\}}+\frac{2}{\opt}
           \leq  \frac{43}{16}+\frac{2}{\opt} = 2.6875+\frac{2}{\opt},$$
where the last inequality follows from Lemma~\ref{lem:gamadelta}. 
\end{proof}
 
Observe that Theorem~\ref{thm315} also holds for any packing algorithm
for which the result stated in Lemma~\ref{lequatronono} holds and that
succeeds in packing any four medium squares into a bin. In particular,
Theorem~\ref{thm315} holds for an exact packing algorithm. 

\medskip

We note that all results presented in this section for games using
NFDH also hold for the SRPG.  This is because all upper bounds were 
proved using area occupation arguments, which hold also for the SRPG. 
It should be noted, however, that results on the upper bound 
for the price of anarchy of a game parameterized by an algorithm may not 
hold for the non-parameterized game. In the particular case of NFDH, the obtained results
imply the following result.

\begin{corollary}\label{corlowerupper}
  For the selfish square packing game,  $2.3634\leq \poa \leq 2.6875$.
\end{corollary}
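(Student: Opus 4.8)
The plan is to assemble the two bounds from results already in hand, treating the lower bound as immediate and spending the effort on transferring the upper bound from the NFDH-parameterized game to the genuine (exact-packing) game.

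For the lower bound there is nothing further to do: Theorem~\ref{thm236} asserts that the price of anarchy of the selfish square packing game is at least $2.3634$, and that statement already refers to the unparameterized game SSPG (the Nash equilibrium exhibited there, consisting of homogeneous bins, is stable under \emph{any} feasible migration, not merely under NFDH). So $\poa \geq 2.3634$ holds directly.

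The substance lies in the upper bound. Theorem~\ref{thm315} gives $\poa(\nfdh)\leq 2.6875$, but this is for the game parameterized by NFDH, and, as the paper warns, an upper bound for a parameterized game need not hold for the non-parameterized one. The key observation I would use is that the exact packing game \emph{is} the SSPG, so it suffices to check that the argument of Theorem~\ref{thm315} survives when the parameter algorithm is an exact packer. Inspecting that argument, it rests on only three ingredients: (i) a bin with a big square occupies area at least $1/4$, which is trivially true for any packing; (ii) Theorem~\ref{lesquaresarea}, that all but at most two bins have occupied area at least $4/9$; and (iii) the purely algebraic Lemma~\ref{lem:gamadelta}. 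Items (i) and (iii) are algorithm-independent, so everything reduces to verifying Theorem~\ref{lesquaresarea} for an exact packer.

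Theorem~\ref{lesquaresarea} in turn depends on Lemma~\ref{lequatronono} and on the fact that the packing routine can place any four medium squares in one bin. The latter is clear for an exact packer, since four squares of side at most $1/2$ fit in a $2\times 2$ arrangement in the unit bin. For the former I would argue: if an exact algorithm fails to pack $X\cup\{y\}$ in one bin, then no feasible packing of $X\cup\{y\}$ exists at all, so in particular $\nfdh$ fails to pack it, and Lemma~\ref{lequatronono} then yields $\area(X)\geq 4/9$. Hence the conclusion of Lemma~\ref{lequatronono} holds verbatim for the exact game, Theorem~\ref{lesquaresarea} carries over, and the counting argument of Theorem~\ref{thm315} gives $\poa\leq 2.6875$ for the exact game, that is, for SSPG. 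The main obstacle is exactly this logical step---getting the direction right, namely that failure of the \emph{strongest} (exact) packer forces failure of the heuristic $\nfdh$, which is what triggers the $4/9$ area bound---since the naive worry is that weakening the packing algorithm could only help stability, whereas here it is the genuine impossibility of packing, not any deficiency of a heuristic, that compels the area to be large. Combining the two bounds yields $2.3634\leq\poa\leq 2.6875$.
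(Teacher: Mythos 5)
Your proposal is correct and follows essentially the same route as the paper: the lower bound is taken directly from Theorem~\ref{thm236} (the homogeneous-bin equilibrium is stable under arbitrary feasible migrations), and the upper bound is transferred by observing that the proof of Theorem~\ref{thm315} uses only area-occupation facts --- in particular that failure of the exact packer implies failure of \nfdh, so Lemma~\ref{lequatronono} and Theorem~\ref{lesquaresarea} carry over --- which is precisely the remark the paper makes after Theorem~\ref{thm315} and before the corollary. You simply spell out that transfer in more detail than the paper does.
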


\smallskip



\section{Bounds for the strong price of anarchy}\label{sec-strong}

In this section, we consider the same cost function mentioned in
the previous section, but we study \emph{strong} Nash equilibria. 
We have the same setting, and the game is analogous, but now coalition 
is allowed, that is, a group of items (not necessarily from a common bin) 
may migrate at once if it leads to a cost that is better for all items in the group. 
It is also allowed that a group of items migrates to a new bin 
(that is, the number of used bins may increase).


We call \emph{strong Nash equilibrium} a configuration of the
selfish rectangle packing game in which no group of items has
incentive to move to decrease the individual cost of its items.
Let $\calsn(G)$ be the set of all strong Nash equilibria of~$G$.
For strong Nash equilibria, analogous to the case of Nash equilibria, the following 
measures are considered. 
The \emph{asymptotic strong price of anarchy} of a class $\calg$ of games is
\begin{equation}
 \spoa:= \limsup_{m\rightarrow\infty\;\;} \sup_{G\in\calg,\; \opt(G)=m\; } \max_{p \in \caln(G)} \frac{\soc(p)}{m}.
\end{equation}


When the game is parameterized by a packing algorithm $\cala$, we denote the strong price of anarchy of the corresponding game by $\spoa(\cala)$.

\subsection{The strong price of anarchy is bounded}

We first note that, as opposed to the pure price of anarchy, the strong 
price of anarchy is bounded.  This is easy to see because one can
guarantee that, in any strong Nash equilibrium, each bin has an occupied 
area of at least~$1/4$, except perhaps for a constant number of bins. 
In fact, this is valid also when the algorithm $\nfdh$ is used.
 
We use the following result, proved by Harren and van Stee~\cite{HarrenV08}, 
that is a generalization of a theorem of Meir and Moser.

\begin{lemma}\label{lemeirm2} 
  Let $T$ be a set of rectangles, and $w_{\max}$ (resp.\ $h_{\max}$) be the maximum width 
  (resp.\ height) of a rectangle in $T$.  If $T$ is packed into a rectangle $R=(a,b)$
  by $\nfdh$, then either a total area of at least $(a-w_{\max})(b-h_{\max})$ is packed 
  or the algorithm runs out of items, i.e., all items are packed.
\end{lemma}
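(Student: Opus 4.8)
The plan is to run the classical NFDH shelf analysis and track, in the failure branch, how full each level is in \emph{both} width and height; this is the rectangle analogue of the Meir--Moser argument behind Theorem~\ref{NewRef1}. First I would fix notation. The algorithm $\nfdh$ sorts the rectangles of $T$ by non-increasing height and builds levels $L_1,\ldots,L_k$ stacked from bottom to top, where the height $h_j$ of level $L_j$ equals the height of its first (hence tallest) rectangle, so that $h_1\ge h_2\ge\cdots\ge h_k$. Write $W_j$ for the total occupied width of $L_j$ and $A_j$ for the total area packed in $L_j$. I am in the case where the algorithm does \emph{not} run out of items, so there is a first rectangle $r$ that $\nfdh$ fails to place; since $r$ is processed after every packed rectangle, its height $h(r)$ is at most the height of each rectangle already placed. (Throughout I assume $\wmax\le a$ and $\hmax\le b$, since otherwise not even one level forms and the statement is vacuous; this also keeps both factors $a-\wmax$ and $b-\hmax$ nonnegative.)

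Next I would record the two quantitative facts that failure forces. For the widths: for each $j<k$ the first rectangle of $L_{j+1}$ did not fit beside the rectangles of $L_j$, and $r$ itself did not fit beside the rectangles of $L_k$; since every such rectangle has width at most $\wmax$, in all cases $W_j+\wmax> a$, that is $W_j> a-\wmax$ for every level $j$. For the heights: failure means $r$ could not even open a new level on top of $L_1,\ldots,L_k$, so
\[
   \sum_{j=1}^{k} h_j + h(r) \;>\; b .
\]

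The heart of the argument is a telescoping lower bound on the packed area. Because rectangles are placed in non-increasing height order, every rectangle in $L_j$ is at least as tall as the first rectangle of $L_{j+1}$, giving $A_j\ge W_j\,h_{j+1}$ for $j<k$; and every rectangle in $L_k$ is at least as tall as $r$, giving $A_k\ge W_k\,h(r)$. Combining these with $W_j>a-\wmax$, summing over all levels, and shifting the height index by one yields
\[
   A \;=\; \sum_{j=1}^{k} A_j \;>\; (a-\wmax)\Bigl(\sum_{j=2}^{k} h_j + h(r)\Bigr),
\]
where $A$ denotes the total packed area. Finally I would substitute the height inequality above in the form $\sum_{j=1}^{k}h_j+h(r)>b$ to obtain $\sum_{j=2}^{k}h_j+h(r)>b-h_1\ge b-\hmax$, and conclude $A>(a-\wmax)(b-\hmax)$, which gives the asserted bound.

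The step I expect to be the main (though modest) obstacle is getting the telescoping exactly right so that only the \emph{first} level's height is ``lost.'' The key insight is to charge each level's area to the \emph{next} level's height rather than its own height, and the last level's area to the failing rectangle's height $h(r)$: this shifts the height sum down by one index, and the failure inequality then shows the shifted sum still exceeds $b-h_1$, after which $h_1\le\hmax$ closes the bound. I would double-check the two monotonicity claims used inside the telescope (that within $L_j$ all heights are $\ge h_{j+1}$, and that within $L_k$ all heights are $\ge h(r)$), since they rely precisely on the decreasing-height processing order and on $r$ being processed last.
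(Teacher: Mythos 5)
Your proof is correct. Note that the paper does not prove this lemma at all: it is quoted from Harren and van Stee as a generalization of the Meir--Moser theorem, so there is no in-paper argument to compare against. What you give is the standard shelf argument for exactly this statement: the failure of the offending rectangle $r$ yields the two inequalities $W_j > a - w_{\max}$ for every level and $\sum_{j=1}^{k} h_j + h(r) > b$, and charging the area of level $L_j$ to $W_j h_{j+1}$ (and of the last level to $W_k\,h(r)$) telescopes so that only $h_1 \le h_{\max}$ is lost, giving $A > (a-w_{\max})(b-h_{\max})$. Both monotonicity facts you flag do hold, since the first rectangle of $L_{j+1}$, and likewise $r$, is processed after everything already placed. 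The only loose end is the degenerate case where an item exceeds the bin in both dimensions (making both factors negative and their product positive); you are right to set this aside, as the lemma is only ever applied to items that individually fit in the target rectangle.
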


\subsection{Lower and upper bounds for the strong price of anarchy}

We first present a lower bound for the strong price of anarchy of the 
selfish square packing game.  Before that, we note that the 
configuration described in Section~\ref{sec:selfishsquarepacking}, 
consisting of homogeneous bins and used to prove a lower bound on the price of anarchy for this game,
is not a strong Nash equilibrium.  Indeed, for instance, as long as 
$\eps$ is small enough, any group of seven squares of side $1/5+\eps$ 
has incentive to migrate to a $(1/4+\eps)$-homogeneous bin.

\begin{theorem}
  The strong price of anarchy of the selfish square packing game is at least $2.076$.
\end{theorem}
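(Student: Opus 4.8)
The plan is to exhibit, for arbitrarily large $N$, a single set of squares together with two packings: an optimal packing $\calo$ using exactly $N$ bins, and a second packing $\calp$ that is a \emph{strong} Nash equilibrium and uses at least $2.076\,N$ bins. As in Section~\ref{sec:selfishsquarepacking}, I would take $\calo$ to consist of $N$ identical bins, each completely filled by a fixed multiset of squares of sides $1/i+\eps$ (over a finite index set to be chosen) plus tiny ``sand'' squares absorbing the residual area, so that $\opt=N$. The design of $\calp$, however, must depart from the homogeneous construction of Theorem~\ref{thm236}.

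The reason is that the homogeneous packing is never a strong equilibrium: the grand coalition that rebuilds one of the full optimal bins raises every participant's occupied area to nearly $1$ and hence strictly helps all of them. The governing principle I would use instead is to split the squares into two roles. The \emph{sparse} squares (the large ones) are placed only a few per bin in $\calp$, producing low occupied area; the \emph{filler} squares (including the sand) are collected into essentially full bins of $\calp$, with occupied area close to $1$. Strong stability then rests on two points: first, a filler square cannot strictly decrease its cost by any migration, since no bin can reach occupied area above $1$, so fillers never join a coalition; second, the sizes of the sparse squares are chosen so that no set of them can be co-packed into a single bin with total area exceeding the (low) occupied area each already enjoys in $\calp$. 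Because any \emph{denser} common bin for a group of sparse squares would necessarily recruit filler squares --- which abstain --- these two facts together forbid every profitable coalition, whether it targets an occupied bin or a fresh one.

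To make the second point rigorous I would reduce it to finitely many area inequalities. For a candidate target bin and a given combination of sparse types, Lemma~\ref{lemeirm2} (equivalently Theorem~\ref{NewRef1}) furnishes an \emph{upper} bound on the total area that any packing can place in that bin; it then suffices to check, for each admissible combination arising in the construction, that this ceiling does not exceed the source occupied area of the corresponding sparse squares in $\calp$. Single-square (pure) stability is inherited for free: a coalition of size one is just a migration, and the inequality already recorded for homogeneous bins in Section~\ref{sec:selfishsquarepacking} rules it out. With stability in hand, I would write $\soc(\calp)$ as the number of sparse bins plus the number of filler bins, use $\opt=N$, and choose the sparse sizes and their multiplicities so as to maximize this ratio subject to the stability inequalities; letting $\eps\to0$ and $N\to\infty$ should yield $\soc(\calp)/\opt>2.076$.

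I expect the main obstacle to be exactly this coalition verification. Unlike the pure case, a coalition may gather sparse squares from many different bins and pack them into a new bin, so a priori there are infinitely many cases; the resolution is that a coalition is useful only if its common bin beats the \emph{largest} source area among its members, and the best attainable area is capped by Lemma~\ref{lemeirm2}, which collapses the analysis to finitely many type combinations. The delicate part is that the very feature that makes $\calp$ inefficient --- many sparse squares sitting in under-filled bins --- is also what a coalition would try to exploit, so the sizes must be tuned so that these under-filled bins cannot be consolidated without help from the saturated fillers. Balancing ``as much wasted space as possible'' against ``no consolidation is ever strictly profitable'' is precisely what fixes the constant at the modest value $2.076$, well below the pure bound $2.3634$.
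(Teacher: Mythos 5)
Your high-level template (an optimal packing of $N$ exactly-full identical bins versus a second packing of the same squares that is a strong equilibrium with many under-filled bins) matches the paper, but the proposal rests on a false premise and omits the one idea that makes the construction work. You assert that ``the homogeneous packing is never a strong equilibrium'' and therefore redesign $\calp$ around a sparse/filler dichotomy. In fact the paper's equilibrium \emph{is} homogeneous; what changes relative to Section~\ref{sec:selfishsquarepacking} is the choice of side lengths: $\sigma_i=(1+\eps)/2^i$ for $i<k$, with $\eps$ tuned so that the smallest side exactly divides $1$. This divisibility is the crux. It guarantees (i) that the smallest squares sit in \emph{exactly} full bins and hence never join any coalition, and (ii) that for an arbitrary coalition $S$ with smallest side $\sigma_i$ targeting a bin $B$, every square of $B\cup S$ subdivides into copies of side $\sigma_i$, of which at most $(2^i-1)^2$ fit in a unit bin --- precisely the number already packed in that square's current $\sigma_i$-homogeneous bin. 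So the smallest member of any coalition never strictly improves, and strong stability follows with no case analysis at all. Your proposal has no substitute for this mechanism: you reduce coalition-proofness to ``finitely many area inequalities'' certified by Lemma~\ref{lemeirm2} or Theorem~\ref{NewRef1}, but those results give a \emph{lower} bound on the area NFDH packs and a \emph{sufficient} condition for packability, respectively; neither provides the needed \emph{upper} bound on the area a coalition can achieve in a common bin. The upper bound the paper uses is the purely combinatorial count $(2^i-1)^2$, which only becomes available because of the power-of-two structure you discarded.

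Two further concrete problems. First, you only require filler bins to have occupied area ``close to $1$''; that is not enough, since a filler in a bin of area $1-\delta$ can profitably join a coalition reaching area $1$. The fillers must tile their bins exactly, which again forces the divisibility condition. Second, the constant $2.076$ is never actually produced: it arises in the paper as $\sum_{i\ge 1}(2^{i+1}-3)/(2^i-1)^2$ (truncated at $k=20$), where $2^{i+1}-3$ is the number of $\sigma_i$-squares added per optimal bin and $(2^i-1)^2$ the capacity of a $\sigma_i$-homogeneous bin; without concrete sizes and multiplicities, your optimization sketch does not establish the claimed bound.
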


\begin{proof}
  For each integer $k \geq 3$, we obtain a lower bound $L_k$ for the 
  strong price of anarchy of the selfish square packing game in which 
  $k$ specific sizes of squares are considered.  The idea is 
  similar to the one used in Section~\ref{sec:selfishsquarepacking}, 
  but requires a more judicious choice of possible square sides.
  For $i=1,\ldots,k-1$, we consider squares of sides $\sigma_i=(1+\eps)/2^i$, 
  where $\eps$ is a very small positive real number. We also consider 
  squares of side $\sigma_k$, a positive real number that divides $1$ 
  and $\sigma_{k-1}$ (and therefore divides any positive value of the 
  form $1-\sum_{i=1}^{k-1} a_i\sigma_i$ for integer values of $a_i$).
  

  First, we describe an optimal packing $\calo$ that consists of $N$ bins 
  with the same configuration, all completely filled with squares of sides 
  $\sigma_1,\ldots,\sigma_k$.  (The value of $N$ will be defined later.) 
  Then, we describe a packing $\calp$ of the same set of squares, 
  consisting of only homogeneous bins, that is a strong Nash equilibrium. 

  The bins of the optimal packing~$\calo$ have the configuration shown
  in Figure~\ref{stronglb1}.  Each bin contains $n_i$ square(s) of
  side $\sigma_i$, for $i=1,\ldots,k-1$, as defined below, and has the
  remaining apparently free space completely filled with small squares 
  of side $\sigma_k$, which are like sand, as long as $\eps$ is small enough.  
  These $N$ bins packed in a completely filled way clearly define an optimal packing.

\begin{figure}[!ht]
  \begin{center}
    \scalebox{0.2}{\includegraphics{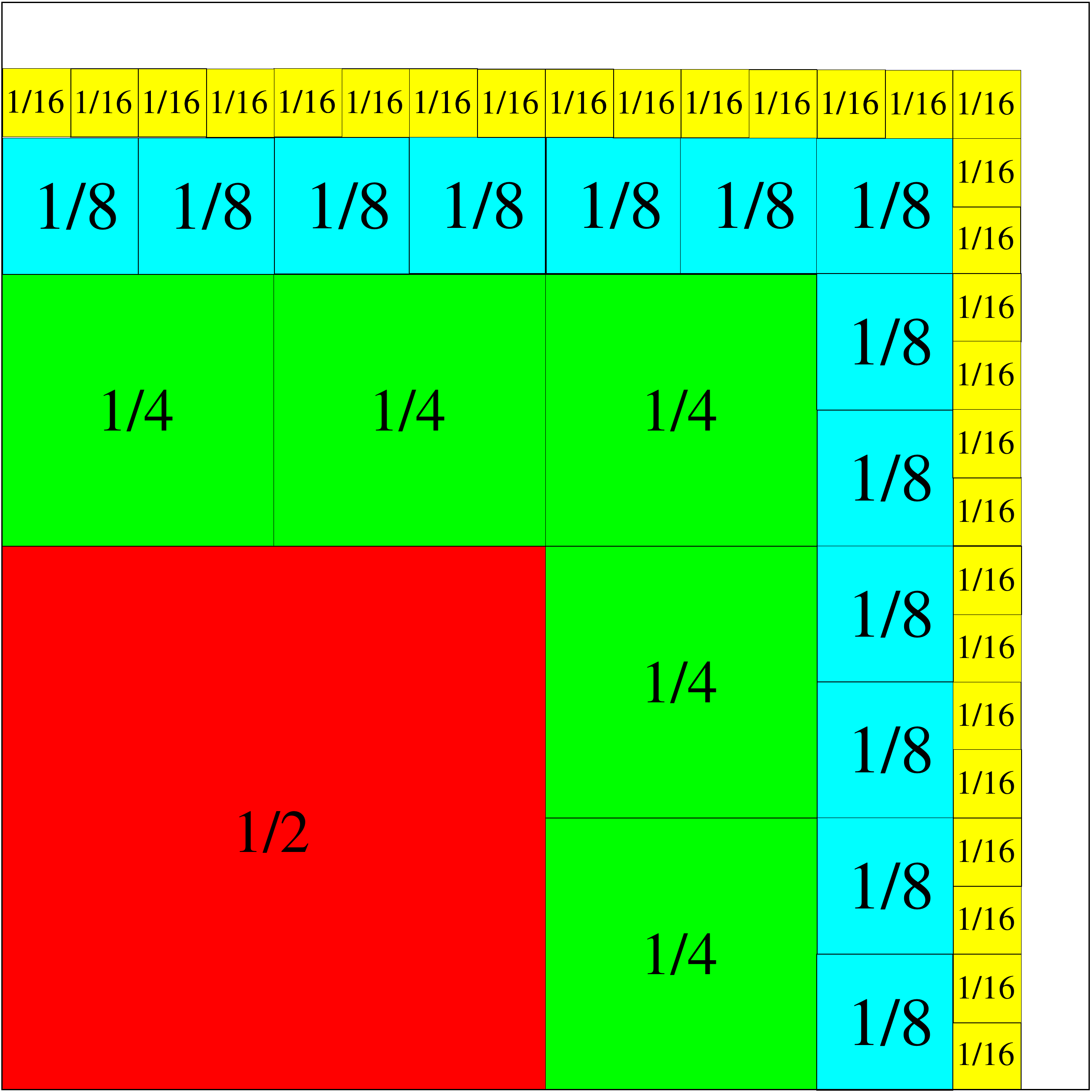}}
    \caption{\label{stronglb1}Configuration of a bin in the optimal packing $\calo$. 
      Each fraction $1/2^i$ represents a square of side $\sigma_i = (1+\eps)/2^i$.}
  \end{center}
\end{figure}

  The maximum number of squares of side $(1+\eps)/2^i$ that pack in a bin is $(2^i-1)^2$.  
  So, by packing always the maximum number of squares of side $\sigma_i$ before starting
  to pack squares of side $\sigma_{i+1}$, we will be able to pack 
  $n_i = (2^i-1)^2 - 4(2^{i-1}-1)^2 = 2^{i+1}-3$ squares of side $\sigma_i$, for $i = 1,\ldots,k-1$. 
  For instance, as shown in Figure~\ref{stronglb1}, $n_1 = 1$, $n_2 = 2^3-3 = 5$, 
  $n_3 = 2^4 - 3 = 13$, and $n_4 = 2^5 - 3 = 29$.  As for $n_k$, it is set so that 
  each bin in $\calo$ is completely filled, that is, 
  $n_k = (1-(2^{k-1}-1)^2\sigma_{k-1}^2)/\sigma_k^2$.

  Now, let us describe the packing $\calp$.  
  It consists of a number of $k$ different types of homogeneous bins,
  one for each side $\sigma_i$.  Since we can pack $(2^i-1)^2$ squares 
  of side $\sigma_i$ in a bin, for ${i=1,\ldots,k-1}$, the number of 
  $\sigma_i$-homogeneous bins in $\calp$ is $m_i = n_i N / (2^i-1)^2$, 
  for $i=1,\ldots,k-1$.  The maximum number of squares of side $\sigma_k$ 
  that pack in a bin is $1/\sigma_k^2$, as $\sigma_k$ divides 1, so 
  ${m_k = n_k N \sigma_k^2 = (1-(2^{k-1}-1)^2\sigma_{k-1}^2)N}$.  
  The value of $N$ is such that every $m_i$ is an integer (for that, 
  it suffices that $N$ be a multiple of $(2^i-1)^2$ for $i=1,\ldots,k-1$).

  We claim that $\calp$ is a strong Nash equilibrium. First, note that
  items of type $\sigma_k$ do not have incentive to migrate, as, in $\calp$, 
  all such items are in bins completely filled.  Now let $S$ be a group of 
  squares whose sides are in $\{\sigma_1,\ldots,\sigma_{k-1}\}$ and suppose 
  that all squares in $S$ have incentive to migrate to a bin $B$.  Let us 
  derive a contradiction by proving that there is a square in $S$ that has 
  no incentive to migrate to $B$.  Let $i$ be such that the side of the 
  smallest square in $S$ is $\sigma_i$.  Clearly, $B$ is a 
  $\sigma_j$-homogeneous bin with $j < i$, otherwise no square in $S$
  would fit in a bin with all the squares in $B$.  But then $\sigma_i$ 
  divides $\sigma_j$ and also the side of each square in $S$.  Thus
  $B \cup S$ corresponds to a set of squares all of them of side $\sigma_i$,
  and, if $B \cup S$ can be packed in a bin, such bin would have occupied
  area at most equal to a $\sigma_i$-homogeneous bin in $\calp$.  Therefore 
  any square in $S$ of side $\sigma_i$ has no incentive to migrate to $B$.
  We conclude that $\calp$ is indeed a strong Nash equilibrium.

  For each $k$, a lower bound $L_k$ is obtained by considering 
  the previously described optimal packing $\calo$, consisting of $N$ bins, 
  and the strong Nash equilibrium $\calp$, consisting of $m_1+\cdots+m_k$
  bins. More precisely, for $\eps = 1/(2^{k-1}-1)$, we have that $n_k = m_k = 0$ 
  and
  $$L_k \ = \ \frac{|\calp|}{|\calo|} \ = \ \frac{m_1+\cdots+m_{k-1}}{N} 
       \ = \ \sum_{i=1}^{k-1} \frac{n_i}{(2^i-1)^2} 
       \ = \ \sum_{i=1}^{k-1} \frac{2^{i+1}-3}{(2^i-1)^2}.$$
  Thus, for instance, as $L_{20} > 2.076$, the theorem follows. 
  Moreover, $\lim_{k \rightarrow \infty} L_k < 2.0761$. 
\end{proof}

\bigskip

Now, we show an upper bound for the strong price of anarchy of the
selfish square packing game, parameterized by the $\nfdh$ algorithm.
As we have previously defined, we recall that squares with sides in
the intervals $(0,1/3]$, $(1/3,1/2]$, and $(1/2,1]$ are called \textit
{small}, \textit{medium}, and \textit{big}, respectively.

First, we show a minimum area occupation of bins in a configuration
that is a strong Nash equilibrium, when all squares have side at
most~$1/k$, for $k\geq 2$.

\begin{lemma}\label{lek}
  Let $\cals$ be a strong Nash equilibrium in which all squares have 
  side at most $1/k$, for an integer $k\geq 2$. Then, each bin in $\cals$, 
  except for at most $k^2$ of them, has an occupied area of at least $(k/(k+1))^2$.
\end{lemma}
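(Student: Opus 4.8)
The plan is to argue by contradiction: assume that more than $k^2$ bins of $\cals$ are \emph{bad}, meaning each has occupied area strictly less than $(k/(k+1))^2$, and then exhibit a coalition that violates the strong equilibrium. The guiding observation is that the threshold $(k/(k+1))^2=(1-\tfrac{1}{k+1})^2$ is precisely the area guarantee of Lemma~\ref{lemeirm2} (Harren–van Stee) for a bin whose squares all have side at most $1/(k+1)$. This motivates splitting the squares (all of side at most $1/k$) into \emph{big} squares, of side in $(1/(k+1),1/k]$, and \emph{sand}, of side at most $1/(k+1)$, and controlling the two types separately. Since a coalition in a strong equilibrium may relocate to a brand-new bin, in each case it suffices to build a single bin of occupied area at least $(k/(k+1))^2$ out of squares currently sitting in bad bins: every such square would then move from a bin of area $<(k/(k+1))^2$ to a strictly fuller bin, so the whole group strictly improves, contradicting that $\cals$ is a strong Nash equilibrium.

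First I would bound the big squares. If the bad bins together contain at least $k^2$ big squares, I pick any $k^2$ of them and feed them to $\nfdh$ in a fresh bin. Because each has side at most $1/k$, $\nfdh$ packs them as a $k\times k$ array (each level receives exactly $k$ of them and exactly $k$ levels fit), so all $k^2$ are packed; and because each has side greater than $1/(k+1)$, the resulting occupied area exceeds $k^2\cdot(1/(k+1))^2=(k/(k+1))^2$. Moving these $k^2$ big squares to this new bin is an improving coalition, a contradiction. Hence the bad bins contain at most $k^2-1$ big squares, and in particular at most $k^2-1$ bad bins contain a big square.

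Next I would show that at most one bad bin is \emph{sand-only} (all of its squares being sand). Suppose $B$ and $B'$ are two sand-only bad bins with $\area(R(B))\ge\area(R(B'))$. Running $\nfdh$ on $R(B)\cup R(B')$ either packs everything in one bin, in which case the coalition $R(B')$ moves into that bin and strictly improves (its area becomes $\area(R(B))+\area(R(B'))>\area(R(B'))$); or it fails to pack everything, but then, as every square involved has side at most $1/(k+1)$, Lemma~\ref{lemeirm2} applied with $R=(1,1)$ and $\wmax,\hmax\le 1/(k+1)$ forces the packed area to be at least $(k/(k+1))^2$, so $\nfdh$ fills a new bin to area at least $(k/(k+1))^2$ using a subset $S$ of these squares. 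Moving exactly $S$ to that new bin is again an improving coalition, since every square of $S$ comes from $B$ or $B'$, whose areas lie below $(k/(k+1))^2$. Either way we contradict the strong equilibrium, so at most one sand-only bad bin exists.

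Finally I would combine the two counts. Every bad bin either contains a big square or is sand-only, so the number of bad bins is at most $(k^2-1)+1=k^2$, as claimed. The step I expect to demand the most care is the sand-only case: one must ensure that the subset $S$ produced by $\nfdh$ is taken as the exact coalition that migrates (so that the vacated bins stay feasible and only strictly improving players move), and that the ``either pack all items or reach area $(k/(k+1))^2$'' dichotomy of Lemma~\ref{lemeirm2} is invoked with the correct rectangle and size bound. The big-square step is geometrically clean, but it is worth verifying explicitly that $\nfdh$ really realizes the $k\times k$ array (exactly $k$ squares per level and exactly $k$ levels), so that all $k^2$ chosen squares are indeed packed together.
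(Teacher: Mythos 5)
Your proof is correct, and its skeleton is the paper's: split the bins into those containing a square of side in $(1/(k+1),1/k]$ and the sand-only bins, bound the bad bins of the first kind by $k^2-1$ via a coalition of $k^2$ such squares filling a fresh bin as a $k\times k$ array of area exceeding $(k/(k+1))^2$, and bound the sand-only bad bins by one. The only genuine divergence is in the sand-only step. You run $\nfdh$ on the union of two bad sand-only bins and invoke the Harren--van Stee dichotomy (Lemma~\ref{lemeirm2}) to extract a set $S$ that fills a new bin to area at least $(k/(k+1))^2$, which requires a bona fide coalition move. The paper instead uses the Meir--Moser theorem (Theorem~\ref{NewRef1}): if $B_1,B_2$ are sand-only bad bins with $\area(R(B_1))\ge\area(R(B_2))$, then for any square $y$ of $B_2$ one has $\area(R(B_1))+y^2 < (k/(k+1))^2 + 1/(k+1)^2 = 1/(k+1)^2+(1-1/(k+1))^2$, so $y$ alone fits into $B_1$ and strictly improves --- a single-player deviation, showing that this half of the bound already holds in any pure Nash equilibrium. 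Your route buys nothing extra but is equally valid, and your closing cautions (migrate exactly the packed subset $S$; verify that $\nfdh$ realizes the $k\times k$ array) are the right points to check and both go through.
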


\begin{proof}
  Let $\cals'$ be the set of bins in $\cals$ with at least one square of side in 
  $(1/(k+1),1/k]$ and $\cals''$ be the remaining bins in~$\cals$.  First, let us argue 
  that there is at most one bin in $\cals''$ with area occupation less than $(k/(k+1))^2$. 
  Theorem~\ref{NewRef1} guarantees that we can pack in one bin any set of squares whose 
  largest side is at most $1/(k+1)$ and the total area is at most $1/(k+1)^2+(1-1/(k+1))^2$.  
  Thus, except for at most one, every bin in $\cals''$ has an area occupation of at least 
  $(1-1/(k+1))^2$, that is, at least $(k/(k+1))^2$.  Second, note that we can pack in one 
  bin $k^2$ squares that have side in~$(1/(k+1),1/k]$, leading to a total area of at least 
  $(k/(k+1))^2$.  Therefore, we can have at most $k^2-1$ bins in $\cals'$ with occupied area 
  less than $(k/(k+1))^2$, otherwise there would be $k^2$ such squares that could migrate to 
  a new bin, with better area occupation.
\end{proof}

\begin{theorem}
  For the selfish square packing game, $\spoa(\nfdh)\leq 2.3605$.  
\end{theorem}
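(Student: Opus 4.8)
The plan is to follow the upper-bound proof for the pure price of anarchy (Theorem~\ref{thm315}), but to exploit the coalition condition to obtain a strictly larger guaranteed area occupation for most bins. I would start with a configuration $p$ that is a strong Nash equilibrium of the game using $\nfdh$, and split the used bins according to the side of their largest square: let $\calb$ be the bins whose largest square is \emph{big}, $\calm$ those whose largest square is \emph{medium}, and $\cals$ those in which every square is \emph{small}. Writing $N_B=|\calb|$, $N_M=|\calm|$, $N_S=|\cals|$, two facts are immediate: each big square needs a private bin, so $\opt\ge N_B$; and every bin in $\calb$ has occupied area at least $1/4$.

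For the small bins I would invoke Lemma~\ref{lek} with $k=3$. The point is that the deviation in its proof is a coalition that migrates to a \emph{fresh} bin, so the argument involves only the small squares themselves and is insensitive to the presence of larger squares elsewhere. Hence all but a constant number (at most $9$) of the bins in $\cals$ have occupied area at least $(3/4)^2=9/16$, which is the strict improvement over the value $4/9$ available in the pure setting (Theorem~\ref{lesquaresarea}).

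The heart of the proof is the medium bins, which pack badly in isolation: four squares of side just above $1/3$ fill a bin only to area about $4/9$, so a naive three-class area accounting would merely reproduce the pure bound $43/16$. Here the coalition condition must be used \emph{across} classes. An underfilled bin in $\calb$ (a big square of side close to $1/2$) or a poorly packed bin in $\calm$ leaves free space into which a coalition of medium or small squares can migrate and strictly lower each of their costs; I would make this quantitative using Theorem~\ref{NewRef1} and Lemma~\ref{lemeirm2}, which certify that $\nfdh$ packs the migrating squares into the slack of such a bin and fills it to a controlled area. Stability then forces that, apart from a constant number of exceptions, every bin in $\calm$ (and every under-filled bin in $\calb$) attains an occupied area of $\delta$ with $4/9<\delta<9/16$. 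Pinning down the exact threshold $\delta\approx 0.5513$ — choosing which squares form the improving coalition, and verifying via Theorem~\ref{NewRef1} that $\nfdh$ actually succeeds in packing them into the free region — is the delicate and error-prone step, and is where I expect the main obstacle to lie; an alternative is to keep three separate area classes and encode the cross-class migration as a coupling constraint in the resulting linear program.

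Finally I would assemble the estimates exactly as in Theorem~\ref{thm315}. With $\opt\ge N_B$ and $\opt\ge\area(\calb)+\area(\calm)+\area(\cals)\ge\frac14 N_B+\delta\,(N_M+N_S)-O(1)$, Lemma~\ref{lem:gamadelta} applied with $\gamma=1/4$ gives $\soc(p)/\opt\le 1+(1-\gamma)/\delta+O(1/\opt)$. Taking the $\limsup$ as $\opt\to\infty$ kills the $O(1/\opt)$ term produced by the constantly many exceptional bins, and the value of $\delta$ above makes $1+(1-\gamma)/\delta=2.3605$, yielding $\spoa(\nfdh)\le 2.3605$.
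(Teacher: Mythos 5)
Your setup --- the three-way split into $\calb$, $\calm$, $\cals$, the bound $\opt \geq |\calb|$, the $9/16$ guarantee for all but nine bins of $\cals$ via Lemma~\ref{lek} with $k=3$, and the final assembly via Lemma~\ref{lem:gamadelta} with $\gamma=1/4$ --- coincides with the paper's, and your target $\delta \approx 0.5513$ is exactly the quantity $4(1-\valor)^2$ with $\valor \approx 0.62876$ that appears there. The gap is the claim that stability forces \emph{every} bin of $\calm$ (up to constantly many) to reach area $\delta > 4/9$. That cannot be proved unconditionally: a family of bins each holding four medium squares of side $1/3+\eps$ has occupied area only $4/9+O(\eps)$ per bin, and no coalition of these squares improves by moving to a fresh bin (four of them fill a new bin to the same $4/9+O(\eps)$, and a fifth does not fit) or into another such bin. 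The only escape route for such medium squares is into an \emph{underfilled bin of $\calb$}, and such a bin need not exist; so a single uniform threshold $\delta$ for $\calm$ is unattainable, and likewise your parallel claim that underfilled bins of $\calb$ reach $\delta$ fails for a bin holding one square of side $1/2+\eps$ when nothing fits beside it profitably.

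The paper resolves this with a dichotomy governed by the parameter $\valor$: either (Case 1) some bin of $\calb$ consists of a single big square of side at most $\valor$, in which case coalitions of medium squares migrating into that bin or to a fresh bin force all but five bins of $\calm$ up to area $4(1-\valor)^2$, and Lemma~\ref{lem:gamadelta} with $\gamma=1/4$ gives $1+\frac{3}{16(1-\valor)^2}$, exactly your last paragraph; or (Case 2) no such bin exists, in which case $\calm$ stays at $4/9$ but every bin of $\calb$, up to $19$ exceptions, is pushed up to area $\valor^2>1/4$ (via Lemma~\ref{lemeirm2} and Lemma~\ref{lequatronono}), giving $1+\frac{9(1-\valor^2)}{4}$. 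The value $2.3605$ arises only by choosing $\valor$ to equalize the two expressions, i.e.\ by solving $(1-\valor^2)(1-\valor)^2=1/12$. Your proposal captures Case 1 but has no mechanism for Case 2; the ``coupling constraint'' you gesture at would have to encode precisely this either/or structure to close the argument.
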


\begin{proof}
  Let $p$ be a configuration that is a strong Nash equilibrium (SNE), and 
  let $L$ denote the set of squares packed in the bins of this configuration.  
  Denote by~$\calb$ the set of bins in the configuration~$p$ that contain 
  a big square (and possibly other squares), $\calm$ the set of bins containing 
  at least one medium square and no big square, and $\cals$ the set of bins 
  containing only small squares.

  Since every strong Nash equilibrium is also a Nash equilibrium, 
  all results that we have previously proved for Nash equilibria are 
  also valid here.  In particular, by Theorem~\ref{lesquaresarea}, we know 
  that all bins in $\calm$, except possibly two of them, have an occupied 
  area of at least~$4/9$.  By Lemma~\ref{lek}, all bins in $\cals$, except
  possibly nine of them, have an occupied area of at least $9/16$. Since all 
  bins in~$\calb$ have an occupied area larger than $1/4$, the following 
  inequalities hold:
  \begin{eqnarray*}
    \opt(L) & \geq & \area(L) \ = \ \area(\calb) + \area(\calm) + \area(\cals) \\ 
            & \geq & \frac{1}{4}|\calb| + \frac{4}{9}(|\calm|-2) + \frac{9}{16}(|\cals|-9).
  \end{eqnarray*}
  
  Next, we show that some of the above fractions (on least occupied area) can be 
  improved, giving a better lower bound for $\opt(L)$. For that, we
  analyze two cases. In what follows, we refer to a value $\valor$ which will be specified
  later. For the moment, assume $0.62<\valor<0.63$.

  \medskip
  
  \noindent \textbf{Case 1.} There is at least one bin $B$ in~$\calb$ containing 
            solely one big square of side at most $\valor$ and no other squares.

  \medskip
  
  \begin{quote}
    We shall prove that, in this case, all bins in $\calm$, except 
    for at most three of them, have an occupied area of at least~$4(1-\valor)^2$.
    To prove this claim, denote by $\calm'$ the set of bins in~$\calm$
    containing at least one medium square of side at most $1-\valor$, 
    and let $\calm'' = \calm \setminus \calm'$.
    
    First we show that all bins in $\calm'$, except for at most two 
    of them, have an occupied area of at least~$7/12$.  Indeed, suppose 
    that there are three bins in $\calm'$ whose occupied area is less 
    than~$7/12$.  Then, three squares of side at most $1-\valor$, each 
    coming from one of these three bins, can migrate to the bin $B$ 
    in $\calb$ and, after this migration, the occupied area of~$B$ 
    becomes larger than $1/4 + 3 (1/9) = 1/4 + 1/3 = 7/12$, 
    contradicting the fact that $p$ is an SNE.


    Now we show that all bins in $\calm''$, except for at most three 
    of them, have an occupied area larger than~$4(1-\valor)^2$.
    Indeed, suppose there are four bins in~$\calm''$ whose occupied 
    area is less than or equal to~$4(1-\valor)^2$.
    In this case, four medium squares with side larger than $1-\valor$,
    each one coming from one of these four bins, can migrate to a new (empty) 
    bin, say~$B'$, because, after the migration, $a(B') > 4(1-\valor)^2$,
    a contradiction, as $p$ is an~SNE.

    We will choose $\valor$ so that $7/12 > 9/16 \geq 4(1-\valor)^2$. 
    Thus we conclude that all bins in $\calm$, except for at most five of them, 
    have an occupied area of at least~$4(1-\valor)^2$.
    Hence, 
    \begin{eqnarray*}
      \opt(L) &\geq& \area(L) \ = \ \area(\calb) +  \area(\calm) + \area(\cals) \\
              &\geq& \frac{1}{4}|\calb| + 4(1-\valor)^2(|\calm|-5) + \frac{9}{16}(|\cals|-9) \\
              &\geq&  \frac{1}{4}|\calb| + 4(1-\valor)^2(|\calm| + |\cals|-14).
    \end{eqnarray*}
    As $\opt(L) \geq |\calb|$, we have that
    $\opt(L) \geq \max \{|\calb|, \frac{1}{4} |\calb| + 4(1-\valor)^2 (|\calm|+|\cals| - 14)\}$. 
    So,
    \begin{eqnarray*}
      \frac{sc(p)}{\opt} = \frac{|\calb| + |\calm| + |\cals|}{\opt} 
      & \leq & \frac{|\calb|+|\calm|+|\cals|-14}{
               \max\{|\calb|, \, \frac{1}{4}|\calb| + 4(1-\valor)^2(|\calm|+|\cals|-14)\}} + \frac{14}{\opt} \\
      & \leq & 1 + \frac3{16(1-\valor)^2} + \frac{14}{\opt},
    \end{eqnarray*}
    by Lemma~\ref{lem:gamadelta}.
  \end{quote}
  
  \medskip
  
  \noindent \textbf{Case 2.} There is no bin in $\calb$ containing solely
                             one big square of side at most $\valor$.
  
  \medskip
  \begin{quote}
    In this case, we shall prove that the bins in $\calb$ have an occupied area of 
    at least $\valor^2$, improving on the trivial occupation bound of~$1/4$.
    
    \medskip
    

    \noindent Claim A: \textit{All bins in $\calb$ containing at least one medium 
    square have an occupied area of at least $\valor^2$, except for at most two of them.}

    \medskip
    
    \noindent \emph{Proof of the claim.}
    Suppose there are three bins $B_1$, $B_2$, and $B_3$ in $\calb$, where 
    $B_i$ contains a big square $b_i$ and a medium square $m_i$ for $i=1,\ldots,3$, 
    with occupied area smaller than $7/12$.  Assume that $b_1\leq b_2\leq b_3$.
    In this case, the set of squares $\{b_1,m_1,m_2,m_3\}$ can migrate
    to a new bin, that will have an occupied area of at least
    $1/4+3(1/9)=7/12>\valor^2$, a contradiction, as $p$ is an SNE.\medskip

    \noindent Claim B: \textit{All bins in $\calb$ containing no medium square 
       have an occupied area of at least~$\valor^2$, except for at most 18 of them.}

    \medskip
    
    \noindent \emph{Proof of the claim.}
    For each bin $B \in \calb$, denote by $b_B$ the big square in $B$.
    Every bin~$b$ in~$\calb$ with $b_B \geq \valor$ has occupied area at least $\valor^2$.  
    So let $\calb'$ be the set of bins $B$ in~$\calb$ with no medium square and 
    $b_B < \valor$.  If $\calb'$ has at most 17 bins, the claim follows.  
    Thus suppose there are at least 18 bins in $\calb'$.  By the hypothesis 
    of Case~2, there is at least one small square in each bin in $\calb'$.  
    Let $B'$ be a bin in~$\calb'$ with $a(B')$ minimum and let~$B$ be a bin in 
    $\calb'$ distinct from $B'$.  Let~$\calr_B=(1,1-b_B)$ be the rectangular region 
    above the square~$b_B$ in $B$.  We must have that $a(B \setminus \{b_B\}) \geq 
    (1-1/3)(1-b_B-1/3)$, otherwise each small item from $B'$ would have incentive 
    to migrate to $B$, as it fits next to $b_B$ and \nfdh\ would be able to pack 
    $B \setminus \{b_B\}$ in the rectangular region~$\calr_B$ by Lemma~\ref{lemeirm2}. 
    As $(1-1/3)(1-b_B-1/3) \geq (1-1/3)(1-\valor-1/3)$, we claim that the occupied 
    area of each bin in $\calb'$, except for possibly $B'$ and at most $17$ other bins 
    in~$\calb'$, is at least $\valor^2$.  Otherwise, the small squares in these 17 bins 
    either pack together in a bin, occupying an area of at least 
    $17(1-1/3)(1-\valor-1/3) \geq 17 \cdot 0.02444 = 0.41548 > \valor^2$, or a proper 
    subset of them packs in a bin, occupying an area of at least $4/9 > \valor^2$, by 
    Lemma~\ref{lequatronono}.  So either all of these small squares or a proper subset 
    of them would have incentive to migrate to a new bin, contradicting the 
    fact that $p$ is a SNE.

    From Claims A and B, we have that all bins in $\calb$ have area least $\valor^2$, except 
    for at most 19 bins. 
    Proceeding now with the analysis of Case~2, we have that 
    \begin{eqnarray*}
      \opt(L) & \geq & \area(L) \ = \ \area(\calb) + \area(\calm) + \area(\cals) \\
              & \geq & \valor^2(|\calb|-19) + \frac{4}{9}(|\calm|-2) + \frac{9}{16}(|\cals|-9) \\
              & \geq & \valor^2(|\calb|-19) + \frac{4}{9}(|\calm| + |\cals|-11).
    \end{eqnarray*}
    As $\opt(L) \geq |\calb|$, we have 
    $\opt(L) \geq \max \{|\calb|-19, \valor^2(|\calb|-19) + \frac{4}{9} (|\calm|+|\cals|-11)\}$. 
    Thus, by Lemma~\ref{lem:gamadelta},
    \begin{eqnarray*}
      \frac{sc(p)}{\opt} = \frac{|\calb| + |\calm| + |\cals|}{\opt} 
      & \leq & \frac{|\calb|+|\calm|+|\cals|-30}
               {\max\{|\calb|-19, \valor^2(|\calb|-19) + \frac{4}{9} (|\calm|+|\cals|-11)\}} 
               + \frac{30}{\opt} \\
      & \leq & 1 + \frac{9(1-\valor^2)}4 + \frac{30}{\opt}.
    \end{eqnarray*}
  \end{quote}

  Making the constant part of the bound of the two cases equal, we obtain the equation
  ${(1-\valor^2)(1-\valor)^2 = 1/12}$, that has only one solution in the interval $[0,1]$, 
  namely, ${x \approx 0.62876}$. We note that the restriction
  $9/16\geq 4(1-\valor)^2$ mentioned in Case 1 is satisfied for this value of $\valor$. 

  Thus, using this value of $\valor$, we conclude that $\spoa(\nfdh) \leq 2.3605$.
\end{proof}

\bigskip

As in the previous section, on the pure price of anarchy, the results 
we have presented for games using NFDH also hold for the SRPG, because 
all upper bounds were proved using area occupation arguments. Hence, the
following result on the strong price of anarchy also holds. 

\begin{corollary}
  For the selfish square packing game, $2.076 \leq \spoa \leq 2.3605$. 
\end{corollary}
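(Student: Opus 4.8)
The plan is to derive both inequalities from results already in hand: the lower bound is essentially immediate, while the upper bound is obtained by transferring the $\nfdh$-parameterized estimate to the exact game. For the lower bound $2.076 \le \spoa$, observe that the strong Nash equilibrium $\calp$ of homogeneous bins constructed earlier was shown to be a strong Nash equilibrium of the \emph{exact} selfish square packing game: the divisibility relations among the sides $\sigma_i$ and the area comparison used to block any coalitional migration never refer to a particular packing algorithm. Together with the optimal packing $\calo$, this yields $L_k > 2.076$ for $k = 20$, so the same family of instances witnesses $\spoa \ge 2.076$ for the SSPG.

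For the upper bound I would show that the proof of $\spoa(\nfdh) \le 2.3605$ goes through verbatim for the exact game. Every step of that proof is an \emph{area-occupation} argument of one of two kinds: either (i) it bounds the occupied area of a bin from below using the impossibility of some migration, or (ii) it produces a contradiction by exhibiting a packable group of squares with incentive to migrate. Both kinds of step are monotone in the power of the packing algorithm, and an exact algorithm is at least as powerful as $\nfdh$; this is precisely the mechanism already noted for the pure case in the remark following Theorem~\ref{thm315}.

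Concretely, the supporting lemmas---Lemma~\ref{lequatronono}, Theorem~\ref{lesquaresarea}, Lemma~\ref{lek}, together with Theorem~\ref{NewRef1} and Lemma~\ref{lemeirm2}---remain valid when ``$\nfdh$ cannot pack'' is replaced by ``the exact algorithm cannot pack''. Indeed, if the exact algorithm fails to pack a set $X \cup \{y\}$ in one bin, then a fortiori $\nfdh$ fails, so the conclusion $\area(X) \ge 4/9$ and its analogues still apply. Symmetrically, each forbidden migration in the $\nfdh$ proof is justified by showing that $\nfdh$ can pack the migrating group into a bin; that same group is exact-packable, so the migration is legal in the exact game and yields the identical contradiction.

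The one point requiring care is the equilibrium bookkeeping at each migration. In a bad-bin situation one must verify that the moving item's cost \emph{strictly} decreases, so that ``the item does not migrate'' is forced to mean ``the item does not fit'' (the only regime in which the area lemmas bite) rather than ``moving would not help''. As in the proofs of Theorem~\ref{lesquaresarea} and of the $\spoa(\nfdh)$ bound, one arranges that the item leaves a no-more-occupied bin, so the strict increase in occupied area---and hence the cost improvement---is automatic. Granting this, the two-case analysis on the parameter $x$, with $(1 - x^2)(1 - x)^2 = 1/12$ and $x \approx 0.62876$, is unchanged, giving $\spoa \le 2.3605$ and completing the proof of the corollary.
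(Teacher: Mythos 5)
Your proposal is correct and follows essentially the same route as the paper: the lower bound is inherited directly because the homogeneous-bin construction is a strong Nash equilibrium of the exact game, and the upper bound transfers from $\spoa(\nfdh)\leq 2.3605$ because every area-occupation lemma and every migration used in that proof remains valid (in the right direction) when $\nfdh$ is replaced by an exact packing algorithm. Your elaboration of the two monotonicity directions (exact failure implies $\nfdh$ failure for the hypotheses of the area lemmas; $\nfdh$ packability implies exact packability for the legal migrations) is exactly the content behind the paper's brief remark.
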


\section{Concluding remarks}

The parameterization of the selfish packing game by a specific packing
algorithm has shown to be interesting in various ways. First,
depending on the algorithm and the analysis of the (pure or strong)
price of anarchy of the corresponding game, the result that one
obtains carries over to the generic game. As we have shown, in the
case of square packing, the upper bound for $\poa(\nfdh)$ and for 
$\spoa(\nfdh)$ have led us to results for~$\poa$ and $\spoa$.  
This indicates that it would be interesting to consider other 
packing algorithms and study the behavior of the corresponding
games. Possibly, better upper bounds for~$\poa$ and $\spoa$ can be
obtained this way.  Another way of viewing the results for such
parameterized games is to consider that they also may be seen as
a coordination mechanism to be used (to control selfish decisions of
the players), specially if one can show that the price of anarchy of
the corresponding game has an acceptable upper bound. In this case, 
it is desirable to have easy to be implemented polynomial-time
algorithms, and this is the case of the algorithms we have considered.


The results we have shown for the $2$-dimensional case, with the very
natural cost function (proportional to the occupied area), are the
first ones in the literature. It would be interesting to further
improve the bounds we have shown here, to tighten the gap.  Another
challenge would be to study the natural generalization of this game to
the $d$-dimensional case, for $d \geq 3$. 

Also, there are several variants of the problems considered here, 
such as no repacking versions, items of other forms, like disks, 
rectangles with rotations allowed, etc. Each such variant requires 
different strategies and are challenging problems.


\bibliography{games}

\end{document}